\setlist[enumerate, 1]{labelindent=\parindent, labelwidth=0.5em, leftmargin=!, align=left, topsep=1pt}
\setlist[itemize, 1]{labelindent=\parindent, labelwidth=0.5em, leftmargin=!, align=left, topsep=1pt}
\newcommand{\spm}[1]{$\scriptstyle\pm#1$}
\newcommand{\answer}[1]{\smallskip\noindent\textbf{#1.}\xspace}
\newtheorem{theorem}{Theorem}
\newtheorem{lemma}{Lemma}
\newtheorem{definition}{Definition}
\newcommand{\para}[1]{\noindent\textbf{#1}\xspace}
\newcommand{\spara}[1]{\smallskip\noindent\textbf{#1}\xspace}
\let\oldnl\nl%
\newcommand{\nonl}{\renewcommand{\nl}{\let\nl\oldnl}}%
\newcommand{\bigO}{\ensuremath{\mathcal{O}}\xspace}
\newcommand{\vect}[1]{\ensuremath{\mathbf{#1}}\xspace}
\newcommand{\NP}{\ensuremath{\mathbf{NP}}\xspace}
\newcommand{\ehecdspdec}{\textsc{ehc\-Edges\-DSPdec}\xspace}
\newcommand{\mhecdspdec}{\textsc{mhc\-Edges\-DSPdec}\xspace}
\newcommand{\lhecdspdec}{\textsc{alhc\-Edges\-DSPdec}\xspace}
\newcommand{\ehecdspopt}{\textsc{ehc\-Edges\-DSP}\xspace}
\newcommand{\lhecdspopt}{\textsc{alhc\-Edges\-DSP}\xspace}
\newcommand{\lhedges}{\textsc{atLeastH\-Edges\-DSP}\xspace}
\newcommand{\damks}{{Dam\ensuremath{k}S}\xspace}
\newcommand{\dalks}{{Dal\ensuremath{k}S}\xspace}
\newcommand{\dks}{{D\ensuremath{k}S}\xspace}
\newcommand{\numcolors}{\pi}
\newcommand{\algatleasthapprox}{\texttt{AtLeastHApprox}\xspace}
\newcommand{\algatleasthilp}{\texttt{AtLeastHILP}\xspace}
\newcommand{\algcolorapprox}{\texttt{ColApprox}\xspace}
\newcommand{\algcolorILP}{\texttt{ColILP}\xspace}
\newcommand{\algbaseline}{\texttt{Heuristic}\xspace}
\newcommand{\algbaselinemod}{\texttt{Heuristic}$^*$\xspace}
\newcommand{\algmlds}{\texttt{Mlds}\xspace}
\begin{document}

\title{Finding Densest Subgraphs with Edge-Color Constraints}

\author{Lutz Oettershagen}
\email{lutzo@kth.se}
\affiliation{%
	\institution{KTH Royal Institute of Technology}
	\city{Stockholm}
	\country{Sweden}
}

\author{Honglian Wang}
\email{honglian@kth.se}
\affiliation{%
	\institution{KTH Royal Institute of Technology}
	\city{Stockholm}
	\country{Sweden}
}

\author{Aristides Gionis}
\email{argioni@kth.se}
\affiliation{%
	\institution{KTH Royal Institute of Technology}
	\city{Stockholm}
	\country{Sweden}
}

\renewcommand{\shortauthors}{Oettershagen et al.}

\begin{abstract}
We consider a variant of the densest subgraph problem in networks with single or multiple edge attributes. 
For example, in a social network, the edge attributes may describe the type of relationship between users, such as friends, family, or acquaintances, or different types of communication. 
For conceptual simplicity, we view the attributes as \emph{edge colors}. The new problem we address is to find a \emph{diverse densest subgraph} that fulfills given requirements on the numbers of edges of specific colors. When searching for a dense social network community, our problem will enforce the requirement that the community is diverse according to criteria specified by the edge attributes. We show that the decision versions for finding \emph{exactly}, \emph{at most}, and \emph{at least} $\vect{h}$ colored edges densest subgraph, where $\vect{h}$ is a vector of color requirements, are \NP-complete, for already two colors. For the problem of finding a densest subgraph with \emph{at least} $\vect{h}$ colored edges, we provide a linear-time constant-factor approximation algorithm when the input graph is sparse. On the way, we introduce the related at least $h$ (non-colored) edges densest subgraph problem, show its hardness, and also provide a linear-time constant-factor approximation. In our experiments, we demonstrate the efficacy and efficiency of our new algorithms.
\end{abstract}

\begin{CCSXML}
	<ccs2012>
	<concept>
	<concept_id>10002951.10003260.10003282.10003292</concept_id>
	<concept_desc>Information systems~Social networks</concept_desc>
	<concept_significance>500</concept_significance>
	</concept>
	<concept>
	<concept_id>10003752.10003809.10003635</concept_id>
	<concept_desc>Theory of computation~Graph algorithms analysis</concept_desc>
	<concept_significance>300</concept_significance>
	</concept>
	</ccs2012>
\end{CCSXML}

\ccsdesc[500]{Information systems~Social networks}
\ccsdesc[300]{Theory of computation~Graph algorithms analysis}

\keywords{Density, Densest subgraph, Diversity, Social networks}

\maketitle

\section{Introduction}
\begin{figure}
	\centering
	\includegraphics[width=0.7\linewidth]{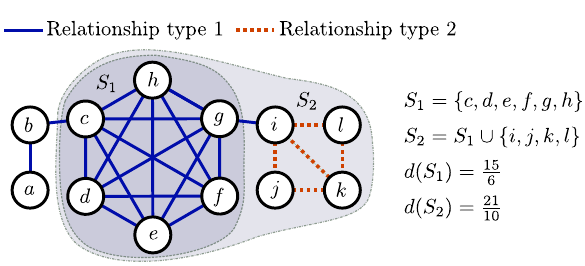}
	\setlength{\belowcaptionskip}{-8pt}
	\caption{Example for the at least $\vect{h}$ colored edges densest subgraph problem in a toy social network with two relationship types. The subgraph induced by $S_1$ is the densest unconstrained subgraph. 
		If we require the densest subgraph to contain at least four edges of type two (red dashed), the graph induced by $S_2$ is optimal.}
	\label{fig:example1}
\end{figure}

Graph analysis plays a pivotal role in understanding the intricate structure of the World Wide Web, offering insights into the relationships and connections that underpin its vast digital landscape. 
Finding densest subgraphs is a classical graph-theoretic problem and one of the most fundamental issues in graph data mining and social network analyses~\cite{lee2010survey,gionis2015dense}.
In one of its most basic versions of the \emph{densest-subgraph problem} (DSP), we are given an undirected finite graph $G=(V,E)$, and the goal is to find a subset of nodes $S\subseteq V$ such that the induced subgraph maximizes the ratio between edges and nodes ${|E(S)|}/{|S|}$. %
Examples of the many Web-related applications are, e.g., 
community detection in social networks~\cite{dourisboure2007extraction,chen2010dense,tsourakakis2013denser}, real-time story identification~\cite{angel2014dense}, identifying malicious behavior in financial transaction networks~\cite{lanciano2023survey} or link-spam manipulating search engines~\cite{gibson2005discovering}, and team formation in social networks~\cite{gajewar2012multi,rangapuram2013towards}.
The problem also has applications in other domains such as, e.g., analyzing biological networks~\cite{saha2010dense,wong2018sdregion}, or general applications in data structures like indexing of reachability and distance queries~\cite{cohen2003reachability}.
Recently, the increasing interest in algorithms that ensure fairness or diversity~\cite{kleinberg2018algorithmic,mitchell2021algorithmic} has been extended to finding \emph{diverse dense subgraphs}.
\citet{anagnostopoulos2020spectral} and \citet{densestdiverse} discuss variants of the DSP that include fairness and diversity properties in graphs with respect to the node attributes.

\smallskip
\noindent
\textbf{Our work:}
We introduce new problem definitions for finding \emph{edge-diverse} dense subgraphs in graphs with \emph{categorical edge attributes}, which we, for conceptual simplicity, denote as edge colors.  
Specifically, we introduce the problems of finding a densest subgraph with \emph{at least $\vect{h}$ colored} edges, where the vector $\vect{h}$ contains for each attribute, i.e., color, the minimum number of edges that are required to be in the solution. Similarly, we define two variants for \emph{exactly} and \emph{at most} $\vect{h}$ colored edges. 
\Cref{fig:example1} shows a small toy example of a network containing two different relationship types.
Computing the standard densest subgraph leads to the monochrome subgraph induced by $S_1$.
To obtain a diverse subgraph that also contains edges of relationship type two, we apply the 
at least $\vect{h}$ colored edges densest subgraph variant and can identify the densest subgraph induced by $S_2$ that contains edges of relationship type two. 
We can apply our new problem in the following scenarios in Web-related networks.

\smallskip
\emph{Web Graph Analysis:} Consider a large web graph in which nodes represent online articles published on websites or blogs and edges represent relations between articles. 
The edges can represent, e.g., citations, extensions, or hyperlinks between the websites.
Additionally, the edges are annotated with meta-information further describing the relationship, e.g., shared topic, type of relationship between the articles, agreement, refusal, or sentiment between articles.  
Now, a typical task is to obtain a summary highlighting the most interconnected parts of the network~\cite{lee2010survey,lanciano2023survey}.
However, without further restrictions on the edge attributes, the resulting subgraph may completely ignore or exclude specific attributes that are not part of the unconstrained densest subgraph.
Using the at least $\vect{h}$ colored densest subgraph, a user has the possibility to include specific attributes into the network summary by setting the corresponding entries in the requirement vector $\vect{h}$ to the minimum number of included relations between the articles.

\smallskip
\emph{Online Social Network Analysis:} The participants of large-scale social networks are commonly connected to hundreds or
even thousands of other users. Typically, user relationships are heterogenic and can be distinguished in, e.g., friendship, family membership, acquaintance, or work colleague~\cite{heidemann2012online}. 
Additionally, the strength of the relationship is often classified into weak and strong ties, where weak ties often have the capacity to bridge diverse social groups and facilitate the flow of information~\cite{granovetter1973strength,sintos2014using}.
By requiring specific numbers of edge attributes, we identify densest subgraphs related across multiple attribute dimensions. In addition to mere interconnections, the resulting dense subgraphs embody communities with diverse relationships.
Moreover, identifying dense subgraphs with minimal specific relationships can be advantageous for subsequent tasks, such as content recommendation. For instance, a dense subgraph including many weak professional connections can form the foundation for recommending new professional contacts bridging into new social groups.

\smallskip
\emph{Fraud Detection in Financial Networks:} Here, nodes represent users and edges represent financial transactions between users. 
The goal is to identify a dense subgraph of transactions that indicates potentially fraudulent activity  by including a minimum number of high-value transactions, transactions in specific locations, and transactions during non-business hours. A dense subgraph, in this case, can represent a group of users with transactions that might be indicative of coordinated fraudulent activity.

\smallskip
\noindent
\textbf{Contributions:} Our contributions are the following.
\begin{enumerate}
	\item We introduce new variants of diverse densest subgraph problems in edge-colored graphs. We are interested in finding densest subgraphs that contain exactly $h_i$, at most $h_i$, or at least $h_i$ edges of color $i\in [\numcolors]$ where $\numcolors$ is the number of colors in the graph. We discuss variants of the problems in which each edge either has a single or multiple colors.
	We show that the corresponding decision problems are \NP-complete.
	\item For the problem of at least $\vect{h}$ colored edges in sparse graphs we introduce a linear-time $\mathcal{O}(1)$ approximation algorithm.
	\item As an additional result, we introduce the densest subgraph problem with at least $h$ (non-colored) edges, show that the problem is \NP-hard as well, and also provide a linear-time $\mathcal{O}(1)$ approximation algorithm.
	\item We evaluate our algorithms on real-world networks and demonstrate that (i) our approximation algorithms have very low relative approximation errors, in most cases under one percent, and (ii) are highly efficient computable.
\end{enumerate}

\smallskip
\noindent
Please refer to \Cref{appendix:proofs} for the omitted proofs.

\section{Related Work}

\para{Finding densest subgraphs.}
Finding densest subgraphs is a fundamental problem in network analysis and has a variety of applications.
The problem has gained increasing interest in recent years, both in theoretical computer science and data-mining communities. 
We discuss only the most relevant work. For a recent survey on the topic, we refer the reader to~\citet{lanciano2023survey}.

The unconstrained version of the problem, 
when the density of a subgraph induced by a subset of vertices $S\subseteq V$ of a graph $G=(V,E)$
is defined as $d(S)=|E(S)|/|S|$, 
is solvable in polynomial time via max-flow computations~\cite{goldberg1984finding}.
For a more efficient but approximate solution, a linear-time greedy algorithm, which removes iteratively the node of the smallest degree and returns the best solution encountered, 
provides an approximation ratio equal to two~\cite{asahiro2000greedily,charikar2000greedy}.
That type of greedy algorithm is often referred to as \emph{peeling}. 
Recently, \citet{chekuri2022densest} provided an almost linear-time flow-based algorithm, approximating the densest subgraph problem within $(1+\epsilon)$.
\citet{chekuri2022densest} also analyzed an iterative peeling algorithm proposed by \citet{boob2020flowless} and showed that it converges to optimality.
Research has also focused on the problems of finding densest subgraphs with 
at most $k$ nodes (\damks), 
at least $k$ nodes (\dalks), 
and exactly $k$ nodes (\dks).
The \dks problem is \NP-hard, 
even when restricted to graphs of maximum degree equal to 3~\cite{feige1997densest},
and the best-known approximation ratio is $\bigO(n^{1/4})$~\cite{bhaskara2010detecting}.
With respect to the upper-bound variant, 
\citet{khuller2009finding} showed that an $\alpha$-approximation 
for the \damks problem leads to an $\alpha/4$-approximation for \dks.

More related to our work is the \dalks problem, which is also \NP-hard~\cite{khuller2009finding}.
\citet{andersen2009finding} designed a linear-time 1/3-approximation algorithm based on greedy peeling, while \citet{khuller2009finding} provided two algorithms, both yielding a 1/2-ap\-prox\-i\-ma\-tion,
using flow computations and solving an LP, respectively.

Our work is also related to finding densest subgraphs in multi\-layer networks, see, e.g.,~\cite{kivela2014multilayer,huang2021survey}. 
\citet{galimberti2017core} discussed the $k$-core decomposition and densest subgraph problems
for multi\-layer networks and provided an approximation algorithm for a different formulation 
than the one we study in this paper. 
We experimentally compare our algorithm with the method of \citet{galimberti2017core} and show that our approach finds denser subgraphs.

\spara{Diverse densest subgraphs.}
Two recent works consider diversity in finding densest subgraphs.
\citet{anagnostopoulos2020spectral} introduce the \emph{fair densest subgraph problem}.
The authors consider graphs with nodes labeled by two colors, and the goal is to find a subset of nodes that contains an equal number of colors.
They show that their problem is at least as hard as the \damks problem.
Moreover, they propose a spectral algorithm based on ideas by~\citet{kannan1999analyzing}.
In the second work, 
\citet{densestdiverse} discuss generalizations of the problem introduced by \citet{anagnostopoulos2020spectral}. 
They introduce two problem variants. 
The first problem guarantees that no color represents more than some fraction of the nodes in the output subgraph.
The second problem is the ``node version'' of the problem we discuss, 
i.e., they study the densest subgraph problem, in which, 
given a vector of cardinality demands for each color class, 
the task is to find a densest subgraph fulfilling the demands. 

Both of these works focus on node-attributed graphs, 
while we study the densest subgraph problem for edge-attributed graphs.

\section{Problem Definitions}\label{sec:problems}
We use $\mathbb{N}$ to denote the natural numbers (without zero).
Furthermore, for $x\in \mathbb{N}$, we use $[x]$ to denote the set $\{1,\ldots,x\}$.
Vectors are denoted in boldface, e.g., $\mathbf{h}\in\mathbb{N}^x$, and $h_i$ represents the $i$-th entry of $\mathbf{h}$.
An undirected, simple graph $G=(V, E)$ consists of a finite set of vertices $V$ and a finite set $E\subseteq\{\{u,v\}\subseteq V\mid u\neq v\}$ of undirected edges. We define $n=|V|$ and $m=|E|$.
An edge-colored graph $G=(V, E, c)$ is a graph with an additional function $c:E\rightarrow 2^\mathbb{N}$ assigning sets of colors to the edges. For notational convenience, we write $c(e)=i$ instead of $c(e)=\{i\}$ if edge $e\in E$ is assigned a single color.
For $S\subseteq V$, we define $E(S)=\{\{u,v\}\in E\mid u,v\in S\}$ and $G(S)=(S, E(S))$ the subgraph induced by~$S$ .

\begin{definition}\label{def:coloreddsp}%
	Given an edge-colored graph $G=(V,E,c)$, a number of colors $\numcolors\in\mathbb{N}$, and a vector $\vect{h}\in \mathbb{N}^\numcolors$, find $S\subseteq V$ such that
	\begin{itemize}
		\item $E(S)$ contains at least $h_i$ edges $e$ with $i\in c(e)$ $\forall$ $i\in[\numcolors]$, and
		\item the density $d(S)=\frac{|E(S)|}{|S|}$ is maximized.
	\end{itemize}
	We denote the problem as \lhecdspopt.
\end{definition}

Similarly, we define the \emph{exactly} $\vect{h}$ and the \emph{at most} $\vect{h}$ colored edges densest subgraph problem variants.

We can check the feasibility for the at least~$\vect{h}$ and the exactly~$\vect{h}$ colored edges variants in linear time by counting the occurrences of colors~$i\in [\numcolors]$ at all edges in $G$. In the case of the at most $\vect{h}$ colored edges variant, the empty subgraph is a feasible solution.
Hence, in the following, we do not make the feasibility check explicit and consider instances to be feasible. 
We assume, w.l.o.g.~, that for each $i\in[\pi]$ the requirement is non-trivial, i.e., $h_i>0$.

\spara{Complexity.}
In contrast to the standard variant of the densest subgraph problem, which can be solved optimally in polynomial time, adding constraints on the numbers of colored edges makes the problems hard.
Indeed, we show hardness already for the case that each edge is colored by one of only two colors.
\begin{theorem}\label{theorem:hardness_all}
The decision versions of the exactly, at most, and at least $\vect{h}$ colored edges version are \NP-complete.
\end{theorem}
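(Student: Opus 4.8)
The plan is to prove membership in \NP\ first and then establish \NP-hardness of each of the three decision versions by reduction from a clique-type problem.

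Membership is immediate for all three variants. Given a candidate set $S\subseteq V$ and a density target $\delta$, we count in time linear in $|E(S)|$ how many edges of each color $i\in[\numcolors]$ lie in $E(S)$, check the \emph{exactly}/\emph{at most}/\emph{at least} requirement against $\vect{h}$, and verify $|E(S)|/|S|\ge\delta$. Thus $S$ is a polynomial-size certificate and all three problems lie in \NP.

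For the \emph{exactly} and \emph{at most} variants I would use one shared construction, reducing from \textsc{Clique} (equivalently \dks with edge target $\binom{k}{2}$): does $G=(V,E)$ contain a clique of size $k$? Build a two-color graph $G'$ by keeping all edges of $G$ and coloring them with color~$1$, and adding two fresh vertices $a,b$ joined by a single color-$2$ edge; set $h_1=\binom{k}{2}$ and $h_2=1$. The crucial observation is that $\binom{j}{2}\ge\binom{k}{2}$ forces $j\ge k$, so any vertex set inducing $\binom{k}{2}$ color-$1$ edges uses at least $k$ original vertices, with equality exactly when those vertices form a $k$-clique. For the \emph{exactly} variant, $h_2=1$ forces both $a$ and $b$ into $S$, the induced graph then has $\binom{k}{2}+1$ edges, and $d(S)\le(\binom{k}{2}+1)/(k+2)$ with equality iff $S\cap V$ is a $k$-clique; hence the threshold $(\binom{k}{2}+1)/(k+2)$ is achievable iff $G$ has a $k$-clique. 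For the \emph{at most} variant one checks that adjoining $\{a,b\}$ can only lower the density of an already dense set, so the optimum ignores the color-$2$ edge and the problem collapses to: is there $S$ with at most $\binom{k}{2}$ color-$1$ edges and $d(S)\ge(k-1)/2$? A short counting argument (density $\ge(k-1)/2$ together with $|E(S)|\le\binom{k}{2}$ forces $|S|\le k$, hence $|S|=k$ and a clique) again yields equivalence with \textsc{Clique}.

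The \emph{at least} variant is where this construction breaks, and I expect it to be the main obstacle. A \emph{lower} bound $h_1=\binom{k}{2}$ on the color-$1$ edges no longer bounds $|S|$ from above: a large, merely dense region of $G$ can satisfy $|E(S)|\ge\binom{k}{2}$ and exceed any clique-based density threshold without containing a $k$-clique, so the solution size cannot be pinned to $k$. I would therefore reduce \lhecdspdec\ not from \textsc{Clique} but from a problem with the same ``grow-the-subgraph'' flavour, namely the uncolored \lhedges\ problem (or the node-cardinality problem \lhknodes), coloring all its edges with color~$1$ and attaching a forced color-$2$ gadget to make the instance genuinely two-colored. The delicate point is that the mandatory color-$2$ edges perturb the objective—adding edges to the numerator and vertices to the denominator—and this perturbation is not monotone in the density ratio; making the gadget density-transparent (for instance a single forced color-$2$ edge together with a correspondingly shifted threshold, or a gadget whose own density is tuned so that it never changes the maximizer) is the key step that must be carried out carefully to obtain an exact equivalence rather than only an approximate correspondence.
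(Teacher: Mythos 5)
Your \NP-membership argument and your reductions for the \emph{exactly} and \emph{at most} variants are sound. For \emph{exactly}, the pendant color-$2$ edge together with the threshold $(\binom{k}{2}+1)/(k+2)$ correctly pins $|S\cap V|=k$ and forces a clique; the paper instead attaches a universal vertex $t$ whose incident edges carry color $1$ and uses the constraint ``exactly $k$ color-$1$ edges'' to pin $|S|=k+1$, but the two constructions are interchangeable. Your counting argument for \emph{at most} (density at least $(k-1)/2$ together with $|E(S)|\leq\binom{k}{2}$ forces $|S|=k$ and a clique) also checks out.

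The genuine gap is the \emph{at least} case, which is precisely the variant the paper has to work hardest for, and you stop at the point where the work begins. You correctly diagnose why the clique threshold fails (a lower bound on edges does not bound $|S|$ from above), but your proposed fix --- reduce from \lhedges after attaching a ``density-transparent'' color-$2$ gadget --- is left as a declared intention rather than a construction, and it additionally presupposes that \lhedges is \NP-hard. In this paper that is a separate, later result (\Cref{theorem:atleasthedgeshardness}), itself proved by a Turing reduction from \dalks over $\Theta(|V|^2)$ instances, so you cannot invoke it for free; and if you do invoke it, observe that coloring every edge with color $1$ and setting $h_1=h$ already yields a valid (one-color) instance of \lhecdspdec, so no gadget is needed and your ``delicate point'' dissolves --- but then the entire burden has merely been shifted onto proving \lhedges hard, which you do not do. The paper's actual proof avoids this dependency: it reduces from $k$-clique by taking the disjoint union of the gadgeted graph $G'$ with a huge complete graph $K_{n^4}$ whose edges all have color $1$, and sets $h=\binom{n^4}{2}+k$. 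The role of the big clique is that the ``at least'' budget can only be met by taking \emph{all} of $K_{n^4}$, which converts the lower bound into an effectively tight constraint on the $G'$ side; a monotonicity calculation (showing $d(S)-d(S\cup\{u\})>0$ by bounding the new edges by $n^2$ and the $G'$-side vertices by $n$ against $\binom{n^4}{2}$) then shows that any vertex beyond the intended $k+1$ strictly decreases the density below the threshold. Some device of this kind --- one that turns ``at least'' back into ``exactly'' --- is what your proposal is missing.
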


\section{Approximation in Sparse Graphs}\label{sec:approx}

In this section, we present a $\mathcal{O}(1)$-approximation for at least $\vect{h}$ colored edges densest subgraph problem (\lhecdspopt) %
for \emph{everywhere sparse graphs}.

We call a graph $G=(V,E)$ sparse if $|E|=\mathcal{O}(|V|)$.
A graph $G=(V,E)$ is \emph{everywhere sparse} if for any subset $V'\subseteq V$ the by $V'$ induced subgraph $G(V')$ is sparse. 
We first focus on the case that each edge in the graph is assigned a single color and discuss the general case 
in~\Cref{sec:manycolors}.

Our approximation for \lhecdspopt is based on finding densest subgraphs with at least $h$ edges (ignoring colors of the edges). We define the problem as follows.

\begin{definition}\label{def:atleasthedgesdsp}%
Given a graph $G=(V,E)$ and $h\in \mathbb{N}$,
find a subset $S\subseteq V$ such that $|E(S)|\geq h$, and the density $d(S)=\frac{|E(S)|}{|S|}$ is maximized.
\end{definition}
We denote the problem with \lhedges and show that this problem without colors is already hard. %
\begin{theorem}\label{theorem:atleasthedgeshardness}
	The decision problem of the at least $h$ edges DSP is \NP-complete.
\end{theorem}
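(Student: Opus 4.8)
The plan is to first establish membership in \NP and then prove hardness by reduction from a classical size-constrained densest-subgraph problem. Membership is immediate: given a candidate set $S\subseteq V$ as a certificate, one checks $|E(S)|\ge h$ and compares the density $d(S)=|E(S)|/|S|$ against the target threshold $\tau$ in linear time, so the decision problem---does there exist $S$ with $|E(S)|\ge h$ and $d(S)\ge\tau$?---lies in \NP.

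For hardness I would reduce from the \dks problem (equivalently, from \textsc{Clique}), which is \NP-hard even on graphs of maximum degree three~\cite{feige1997densest}. Starting from an instance asking whether $G$ contains a set of $k$ vertices spanning at least $q$ edges, the natural first move is to set the edge floor to $h=q$ and the density target to $\tau=q/k$: a witnessing $k$-set $S$ then satisfies $|E(S)|\ge q=h$ and $d(S)=|E(S)|/k\ge q/k=\tau$, which gives the forward direction for free. The real content of the reduction is the reverse implication, namely that any $S$ meeting the floor $|E(S)|\ge h$ with density $d(S)\ge\tau$ can be converted into a feasible solution of the source instance.

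The reverse direction is where I expect the main obstacle to lie. Because the density objective is ``averaging friendly,'' a solution to \lhedges may in principle meet the edge floor using a vertex set that is either much larger, or much smaller and denser, than the planted $k$-set, so the floor alone does not pin down the vertex count. Writing $f(s)=\max_{|S|=s}|E(S)|$, which is nondecreasing in $s$, the optimum of \lhedges equals $\max\{\,f(s)/s : f(s)\ge h\,\}$, i.e.\ it is a maximum over all admissible sizes $s\ge s_0$ with $s_0=\min\{s:f(s)\ge h\}$, and thus coincides with a \dalks optimum for the induced size threshold $s_0$. This identity exposes precisely the difficulty: to force the maximizer to sit at the intended size I would either (i) exploit the bounded-degree guarantee of the source instance, where every $s$-vertex subgraph has at most $3s/2$ edges and hence density below a fixed constant, so that the attainable density--edge trade-off is tightly controlled; or (ii) attach a gadget, or apply a bounded blow-up, that amplifies the gap between a genuine $k$-clique and any near-clique, making the planted structure the strictly densest way to accumulate $h$ edges. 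Verifying that one such modification simultaneously preserves the forward direction and rules out all spurious dense subgraphs is the step I anticipate will require the most care.
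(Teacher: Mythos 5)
Your \NP-membership argument is fine, but the hardness half is not a proof: you set up the forward direction of a reduction from \dks and then explicitly leave the reverse direction open, offering only two unverified repair strategies. The reverse direction genuinely fails for the parameters you chose. With $h=q$ and $\tau=q/k$, any graph whose \emph{global} density already meets $q/k$ is a yes-instance of \lhedges by taking $S=V$, independently of whether any $k$-set spans $q$ edges; for example, in a large $3$-regular graph every induced subgraph has density below $3/2$ and the whole graph has density essentially $3/2$, so whenever $q/k\le 3/2$ the \lhedges instance is trivially yes while the \dks instance may be no. Neither proposed fix resolves this as stated: the bounded-degree route only caps densities, it does not tie the maximizer to size $k$, and the gadget route is exactly the hard part you would still have to design and verify.

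The irony is that your own observation already contains the correct reduction, pointed in the other direction. You note that the \lhedges optimum equals $\max\{f(s)/s : f(s)\ge h\}$, i.e., it coincides with a \dalks optimum for the induced size threshold. The paper exploits precisely this correspondence (its \Cref{lemma:lemma2}), but reduces \emph{from} \dalks, which is known to be \NP-hard~\cite{khuller2009finding}, rather than from \dks: given a \dalks instance with node threshold $k$ and density target $\beta$, it forms the polynomially many \lhedges instances with $h_i=i$ for $i\in[|V|^2]$ and the same density target, and argues that the \dalks instance is a yes-instance iff one of these edge-threshold instances admits a solution with at least $k$ nodes and density at least $\beta$. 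This direction never needs to pin down the size of the maximizer, which is exactly the obstacle your \dks-based attempt cannot get past. As written, your proposal has a genuine gap at the reverse implication and does not establish hardness.
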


\subsection{Solving the at Least $h$-Edges DSP} 

First, assume we have an algorithm for the at least $k$-nodes DSP problem.
We can use it to solve the at least $h$-edges DSP problem (\lhedges).
To this end, let $\ell(h)$ a lower bound on the number of nodes of a graph with at least $h$ edges.
For generality, define the lower bound $\ell(h,p)$ for graphs with up to $p$ parallel edges between two nodes and define $\ell(h)=\ell(h,1)$ (we use the general version for the case of multi\-graphs as discussed in \Cref{sec:manycolors}).

\begin{lemma}\label{lemma:upperbound}
Let $G=(V, E)$ be a graph with $|E|\geq h$ and at most $p$ parallel edges between each pair of nodes. Then
$\ell(h,p)=\frac{1}{2} + \frac{\sqrt{ p ^2 + 8 h p}}{2 p}$.
\end{lemma}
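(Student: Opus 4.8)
The plan is to convert the edge count into a constraint on the node count by bounding $|E|$ from above in terms of $n=|V|$. If $G$ has $n$ nodes and at most $p$ parallel edges between any pair of distinct nodes, then there are $\binom{n}{2}$ unordered pairs, each contributing at most $p$ edges, so $|E|\leq p\binom{n}{2}=\frac{p\,n(n-1)}{2}$. Combining this with the hypothesis $|E|\geq h$ yields $h\leq \frac{p\,n(n-1)}{2}$, a quadratic inequality in $n$.

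Next I would rearrange this into $n^2-n-\frac{2h}{p}\geq 0$ and solve the associated equation $n^2-n-\frac{2h}{p}=0$ via the quadratic formula. Its roots are $\frac{1\pm\sqrt{1+8h/p}}{2}$, and since $n>0$ while the parabola opens upward, the inequality forces $n$ to be at least the larger root $\frac{1}{2}+\frac{1}{2}\sqrt{1+8h/p}$. A short simplification — writing $p^2+8hp=p^2(1+8h/p)$ and pulling $p$ out of the square root using $p>0$ — rewrites this as $\frac{1}{2}+\frac{\sqrt{p^2+8hp}}{2p}$, which is exactly the claimed value of $\ell(h,p)$. Setting $p=1$ then recovers $\ell(h)=\ell(h,1)$.

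I do not expect a genuine obstacle here: the argument is an extremal edge-count bound followed by inverting a quadratic. The only points needing care are the direction of the inequality — since we want a lower bound on $n$ we must take the larger root and verify the parabola's orientation — and the fact that $\ell(h,p)$ is a real-valued bound, so if an integer lower bound is required downstream one would pass to $\lceil \ell(h,p)\rceil$. I would also remark that the bound is tight precisely when $p\binom{n}{2}=h$ admits an integer solution $n$, as witnessed by the complete graph (with $p$-fold parallel edges) on that many nodes.
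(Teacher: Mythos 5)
Your argument is correct and is essentially the paper's own proof: the paper likewise observes that the node count is minimized by the complete $p$-fold multigraph, giving $h = p\binom{|V|}{2}$, and solves for $|V|$. You merely spell out the intermediate quadratic inequality and the choice of the larger root, which the paper leaves implicit.
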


\begin{algorithm}[htb]
\label[algorithm]{alg:alg}
\caption{Algorithm for \lhedges}
\Input{Graph $G=(V,E)$ and $h\in\mathbb{N}$}
\Output{Densest subgraph with at least $h$ edges}
\For{$i\in \{\ell(h), \ldots, n\}$}{
	Compute the at least $i$ nodes DSP $S_i$\;
}
\Return $S_i$ with maximal density and at least $h$ edges\;
\end{algorithm}

The following lemma establishes a connection between the at least $k$ nodes and at least $h$ edges DSP.
\begin{lemma}\label{lemma:lemma2}
Given a graph $G$ and $h\in\mathbb{N}$.
Let $k$ be the minimum number of nodes over all graphs that are densest subgraphs of $G$ with at least $h$ edges. 
Furthermore, let $S$ be an optimal solution for the at least $k$-nodes DSP problem in $G$.
Then $S$ is also an optimal solution for the densest subgraph with at least $h$ edges. 
\end{lemma}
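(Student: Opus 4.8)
The plan is to compare the two optimal density values and then exploit the identity $|E(S)|=d(S)\,|S|$ to transfer feasibility from one problem to the other. Write $d^*_h$ for the optimal density of \lhedges and $d^*_k$ for the optimal density of the at least $k$ nodes DSP. By the definition of $k$ there is a subgraph $S^*$ that is optimal for \lhedges, so $|E(S^*)|\ge h$ and $d(S^*)=d^*_h$, and that has exactly $|S^*|=k$ nodes; in particular $d^*_h=|E(S^*)|/k$. Since $|S^*|=k$, the subgraph $S^*$ is a feasible solution of the at least $k$ nodes DSP, and therefore $d^*_k\ge d(S^*)=d^*_h$. This inequality comes for free and is the only place where the \emph{minimality} of $k$ is used, namely to supply a witness with exactly $k$ nodes.

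The crucial step is to show that the given optimum $S$ of the at least $k$ nodes DSP is itself feasible for \lhedges, i.e., that $|E(S)|\ge h$. Here I would use $|S|\ge k$ together with $d(S)=d^*_k\ge d^*_h$ and simply rewrite the edge count as a product: $|E(S)|=d(S)\,|S|=d^*_k\,|S|\ge d^*_h\,|S|\ge d^*_h\,k=|E(S^*)|\ge h$. Hence $S$ meets the edge requirement and is a legal \lhedges solution.

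It then remains to close the loop by optimality. Because $S$ is feasible for \lhedges, its density cannot exceed the \lhedges optimum, so $d(S)\le d^*_h$; combined with $d(S)=d^*_k\ge d^*_h$ from the first step, this forces $d(S)=d^*_h$. Thus $S$ attains the optimal density of \lhedges while satisfying its edge constraint, which is exactly the claim. The entire argument is short; the only point requiring attention is the middle inequality chain, whose single non-obvious ingredient is that scaling the density $d^*_k$ by at least $k$ nodes cannot push the edge count below $|E(S^*)|=d^*_h\,k$. I do not anticipate any further obstacle.
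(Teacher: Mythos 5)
Your proof is correct, and its core ingredient is the same as the paper's: the minimum-node optimal solution $S^*$ of \lhedges has exactly $k$ nodes and is therefore feasible for the at least $k$-nodes DSP, which yields $d^*_k\geq d^*_h$. The paper proves the lemma by a two-line contradiction (if $S$ were not optimal for \lhedges, the \lhedges optimum $S'$ would satisfy $|S'|\geq k$ and $d(S')>d(S)$, contradicting optimality of $S$), which tacitly treats ``not optimal'' as ``feasible but less dense'' and never checks that $S$ actually has at least $h$ edges. Your middle step, $|E(S)|=d(S)\,|S|\geq d^*_h\,k=|E(S^*)|\geq h$, supplies exactly that missing feasibility argument, so your write-up is slightly more complete than the paper's while following the same route.
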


Based on \Cref{lemma:lemma2} \Cref{alg:alg} computes the solution of the at least $h$ edges DSP.
\begin{theorem}\label{theorm:optimal}
\Cref{alg:alg} is optimal for \lhedges.
\end{theorem}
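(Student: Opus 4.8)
The plan is to wire together \Cref{lemma:upperbound} and \Cref{lemma:lemma2}: the former guarantees that the minimizing node count is inside the loop's range, and the latter guarantees that optimality transfers at exactly that count. Let $\mathrm{OPT}$ denote the optimal density of the \lhedges instance, i.e.\ the maximum of $d(S)$ over all $S$ with $|E(S)|\geq h$. Let $k$ be the minimum number of nodes taken over all densest subgraphs of $G$ with at least $h$ edges (equivalently, over all optimal \lhedges solutions). This $k$ is precisely the quantity appearing in \Cref{lemma:lemma2}.

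First I would check that $k$ lies in the range $\{\ell(h),\ldots,n\}$ over which the loop iterates. The bound $k\leq n$ is immediate. For the lower bound, any feasible solution has at least $h$ edges, so by \Cref{lemma:upperbound} with $p=1$ it must contain at least $\ell(h)=\ell(h,1)$ nodes; in particular $k\geq \ell(h)$. Hence the loop does reach the iteration $i=k$, where it computes an optimal solution $S_k$ of the at least $k$-nodes DSP.

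Next I would invoke \Cref{lemma:lemma2} at that iteration: it states that $S_k$ is then an optimal solution of the \lhedges problem, so $|E(S_k)|\geq h$ and $d(S_k)=\mathrm{OPT}$. Thus among the candidates produced by the loop there is at least one feasible subgraph (satisfying the edge constraint) whose density equals $\mathrm{OPT}$, which also shows that the final selection has a nonempty set of valid candidates to choose from.

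Finally I would verify the selection step. The algorithm returns the computed $S_i$ of maximal density among those with $|E(S_i)|\geq h$. Since $S_k$ witnesses a valid candidate of density $\mathrm{OPT}$, the returned density is at least $\mathrm{OPT}$; conversely, every $S_i$ with $|E(S_i)|\geq h$ is itself a feasible \lhedges solution, so $d(S_i)\leq \mathrm{OPT}$ by definition of $\mathrm{OPT}$. Hence the output has density exactly $\mathrm{OPT}$ and at least $h$ edges, and is therefore optimal. The one point requiring care — and the single place where \Cref{lemma:lemma2} is indispensable — is that optimality for the edge-constrained problem is inherited from the node-constrained problem at the specific index $i=k$; once the range check places $k$ inside the loop, the remaining density comparison in the max-selection is routine.
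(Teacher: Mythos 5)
Your proof is correct and follows essentially the same route as the paper's: both locate the critical index $i=k$ inside the loop's range, invoke \Cref{lemma:lemma2} to transfer optimality from the node-constrained to the edge-constrained problem, and observe that the max-selection over feasible candidates then returns an optimal solution. Your version is merely more explicit about the lower-bound check $k\geq\ell(h)$ via \Cref{lemma:upperbound} and about the two-sided density comparison, which the paper leaves implicit.
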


Now, let $G$ be an everywhere sparse graph, and assume we have an $\alpha$-approximation algorithm for the at least $k$-nodes DSP problem.
We can obtain an $\mathcal{O}(1)$-approximation for the least $h$-edges DSP problem (\lhedges).

\begin{algorithm}[htb]
\label[algorithm]{alg:alg2}
\caption{Algorithm for \lhedges}
\Input{Everywhere sparse graph $G=(V,E)$ and $h\in\mathbb{N}$}
\Output{Approx.~of densest subgraph with at least $h$ edges}
	\For{$i\in \{\ell(h), \ldots, n\}$}{
		$\alpha$-approximate the at least $i$-nodes DSP and obtain solution $S_i$\;
		
		\If{$G(S_i)$ does not have at least $h$ edges}{
			add edges to $G(S_i)$ to obtain $G_i'$ with at least $h$ edges\;\label{alg:alg2:addedges} %
		}
	}
	\Return $G_i'$ with maximum density\;
\end{algorithm}

\begin{theorem}\label{theorem:lhedgesapprox}
	\Cref{alg:alg2} gives an $\mathcal{O}(1)$-approximation for \lhedges.
\end{theorem}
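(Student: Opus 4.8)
The plan is to show that, among the candidate subgraphs $G_i'$ that \Cref{alg:alg2} constructs, the one produced in the iteration $i=k^{*}$ already has density within a constant factor of the optimum, where $k^{*}$ is the minimum number of nodes of any optimal \lhedges solution $S^{*}$ (density $d^{*}$). I will exploit two consequences of everywhere sparsity; write $\gamma$ for the sparsity constant, so $|E(V')|\le\gamma|V'|$ for every $V'\subseteq V$. First, this caps the achievable density: applying the bound to $V'=S^{*}$ gives $d^{*}=|E(S^{*})|/k^{*}\le\gamma$. Second, since $S^{*}$ is feasible it has $\ge h$ edges, so by \Cref{lemma:upperbound} (with $p=1$) it has $k^{*}\ge\ell(h)$ nodes, and trivially $k^{*}\le n$; hence $i=k^{*}$ is one of the iterations of \Cref{alg:alg2}.

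At iteration $i=k^{*}$, \Cref{lemma:lemma2} tells us that the optimum of the at least $k^{*}$-nodes DSP equals $d^{*}$, so the $\alpha$-approximate solution $S_{k^{*}}$ satisfies $d(S_{k^{*}})\ge\alpha d^{*}$. I then split into two cases according to \Cref{alg:alg2:addedges}. If $G(S_{k^{*}})$ already has at least $h$ edges, no edges are added, $G_{k^{*}}'=G(S_{k^{*}})$ is feasible, and its density is $\ge\alpha d^{*}$, a constant-factor approximation. The interesting case is $|E(S_{k^{*}})|<h$.

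In that case, because $d(S_{k^{*}})\ge\alpha d^{*}$ and $|E(S_{k^{*}})|<h$, the subgraph must be small: $|S_{k^{*}}|=|E(S_{k^{*}})|/d(S_{k^{*}})<h/(\alpha d^{*})$. To reach $h$ edges the algorithm adds at most $h-|E(S_{k^{*}})|\le h$ edges (these exist since $|E|\ge h$ by feasibility), each introducing at most two new nodes, so $G_{k^{*}}'$ has at least $h$ edges and fewer than $h/(\alpha d^{*})+2h$ nodes. Therefore
\[
d(G_{k^{*}}') > \frac{h}{h/(\alpha d^{*})+2h} = \frac{\alpha d^{*}}{1+2\alpha d^{*}} \ge \frac{d^{*}}{1/\alpha + 2\gamma},
\]
using $d^{*}\le\gamma$ in the last step. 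Since \Cref{alg:alg2} returns the $G_i'$ of maximum density and $G_{k^{*}}'$ is one of the candidates, the returned solution has density at least $\tfrac{1}{\beta}d^{*}$ with $\beta=1/\alpha+2\gamma$. As $\alpha$ and $\gamma$ are constants, $\beta=\bigO(1)$, which yields the claimed approximation; feasibility is immediate because every $G_i'$ is constructed to have at least $h$ edges.

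The main obstacle is this deficient-edges case, i.e.\ showing that repairing an infeasible approximate solution does not destroy its density. Both ingredients of everywhere sparsity are essential: the density cap $d^{*}\le\gamma$, and the observation that an edge-deficient yet dense solution must be small, so that padding it with $\bigO(h)$ nodes dilutes the density only by a constant factor. One should also verify the minor points that enough edges remain to be added and that $i=k^{*}$ falls within the loop range.
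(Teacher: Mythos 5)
Your proof is correct, and while it shares the paper's overall skeleton---identify the critical iteration $i=k^{*}$ via \Cref{lemma:lemma2}, then repair an edge-deficient approximate solution by adding at most $h$ edges and $2h$ nodes---the key estimate is obtained by a genuinely different route. The paper bounds the node blow-up multiplicatively, writing $|S_i'|\le |S|+2h\le |S|+2c_1|E(S)|\le c|S|$, which forces it to \emph{assume} that the $\alpha$-approximation for the at-least-$k$-nodes DSP also approximates the edge count (i.e., $c_1|E(S)|\ge |E(S^*)|\ge h$), an extra hypothesis it then verifies separately for the algorithms of Khuller--Saha and Andersen--Chellapilla. You dispense with that hypothesis entirely: from $d(S_{k^{*}})\ge \alpha d^{*}$ and $|E(S_{k^{*}})|<h$ you deduce $|S_{k^{*}}|<h/(\alpha d^{*})$, and then bound the repaired density directly as $h$ edges over at most $h/(\alpha d^{*})+2h$ nodes, invoking everywhere sparsity only through the density cap $d^{*}\le\gamma$. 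The paper's version buys a bound expressed relative to $d(S)$ itself and stays closer to the multigraph adaptation used later; your version buys generality (any $\alpha$-approximation of the at-least-$k$-nodes DSP suffices) and makes the role of sparsity more transparent. Both yield a constant factor depending only on $\alpha$ and the sparsity constant. The only cosmetic caveat is that you use the convention $d(S)\ge\alpha d^{*}$ with $\alpha\le 1$ while the paper writes $d(S)\ge d^{*}/\alpha$ with $\alpha\ge 1$; your usage is internally consistent.
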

\begin{proof}
	\label{atleasthapprox}
	Let $k$ be the minimum number of nodes over all graphs that are densest subgraphs with at least $h$ edges. 
	And, let $d(S)$ be an $\alpha$-approximation of the at least $k$-nodes DSP.
	Then, from \Cref{lemma:lemma2}, it follows that $d(S)$ is also an $\alpha$-approximation of the at least $h$-edges DSP.
	However, $S$ may not be a feasible solution because it may have less than $h$ edges, i.e., $|E(S)|< h$.
	Assume that for a constant $c_1\in \mathbb{N}$ the result of the $\alpha$-approximation, it holds that $c_1|E(S)|\geq |E(S^*)|\geq h$, where $S^*$ is the optimal solution for the at least $i$-nodes DSP.
	\Cref{alg:alg2} adds the possibly missing edges to obtain the subgraphs $G_i'=(S_i',E_i')$ (line~\ref{alg:alg2:addedges}).
	At most $h$ edges and $2h$ nodes are added, and $|S_i'|\leq|S|+2h\leq |S|+2c_1   |E(S)| \leq c|S|$ where the last inequality holds due to the everywhere sparseness property of $G$ for a large enough constant $c\in \mathbb{N}$. 
	Consequently,
	\[
	d(S')=\frac{|E(S')|}{|S'|}\geq \frac{|E(S)|}{c|S|}= \frac{1}{c}d(S)\geq \frac{1}{c\alpha} d^*,
	\]    
	where $d^*$ is the optimal density. %
\end{proof}

The assumption that $c_1|E(S)|\geq |E(S^*)|\geq h$ with $c_1\in \mathbb{N}$ holds for example for the 2-approximation and 3-approximation algorithms provided by~\citet{khuller2009finding} and \citet{andersen2009finding}, respectively.

The running time complexity of \Cref{alg:alg2} is in $\mathcal{O}(n (T_{\text{appr}} + h))$ with $T_{\text{appr}}$ being the running time of the at least $i$ nodes DSP approximation as we have $n$ rounds and in each round we call the approximation and have to add at most $h$ edges to obtain $G'_i$.
Using the 3-approximation algorithm by \citet{andersen2009finding} based on the $k$-core computation, we can obtain an approximation with total running time in $\mathcal{O}(n+m)$.

\begin{algorithm}[htb]
	\label[algorithm]{alg:alg3}
	\caption{Approximation~for \lhedges}
	\Input{Everywhere sparse graph $G=(V,E)$ and $h\in\mathbb{N}$}
	\Output{Approx.~of densest subgraph with at least $h$ edges}

	$G_0 \gets G$, $i\gets 0$, and $i_{\max} \gets 0$\;
	\While{$|E(G_i)|\geq h$}{
		$i_{\max} \gets i$\;
		Increment $i$\;
		Let $v_i$ be a node with minimum degree\;
		$G_i \gets G_{i-1}\setminus \{v_i\}$ \tcp*{remove $v_i$ and all incident edges}
	}
	\Return $G_i$ for $i\in\{0,\ldots,i_{\max}\}$ with maximum density\;
\end{algorithm}
\begin{theorem}
    \label{thm: thm5}
	\Cref{alg:alg3} is a $\mathcal{O}(1)$-approximation for \lhedges with running time in $\mathcal{O}(n+m)$. 
\end{theorem}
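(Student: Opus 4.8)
The plan is to show that \Cref{alg:alg3} is an instance of the general schema of \Cref{alg:alg2}, so that the $\mathcal{O}(1)$-approximation guarantee follows from \Cref{theorem:lhedgesapprox}, and then to argue the running time separately. Recall that \Cref{alg:alg2} relies on an $\alpha$-approximation for the at least $k$-nodes DSP; the key observation is that the peeling procedure in \Cref{alg:alg3} is exactly the $1/3$-approximation of \citet{andersen2009finding}. That algorithm repeatedly removes the minimum-degree node and, for each target $k$, returns the densest subgraph encountered among those with at least $k$ nodes. Since \Cref{alg:alg3} removes minimum-degree nodes one at a time and records every intermediate subgraph $G_i$, a single peeling pass simultaneously realizes the at least $i$-nodes approximation for all $i$ on the peeling sequence.

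First I would make the correspondence explicit: the while loop peels until the current subgraph $G_i$ drops below $h$ edges, and $G_{i_{\max}}$ is the last subgraph with at least $h$ edges. For each $i\in\{0,\ldots,i_{\max}\}$, the subgraph $G_i$ has $n-i$ nodes and at least $h$ edges, so it is a \emph{feasible} solution for \lhedges; thus the "add edges" repair step of \Cref{alg:alg2} is never triggered (the constant $c$ from \Cref{theorem:lhedgesapprox} can be taken to be the plain sparseness constant). I would then invoke \Cref{lemma:lemma2}: let $k$ be the minimum number of nodes over all densest subgraphs with at least $h$ edges, noting $k\geq \ell(h)$ and $k\leq n$ so $k$ lies on the peeling range. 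Because the Andersen peeling gives a $1/3$-approximation to the at least $k$-nodes DSP, and by \Cref{lemma:lemma2} this is also a $1/3$-approximation to the optimal at least $h$-edges density $d^*$, taking the maximum-density $G_i$ over all recorded subgraphs yields density at least $\frac{1}{3}d^*$, which is the claimed $\mathcal{O}(1)$ guarantee.

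For the running time, I would observe that the entire algorithm is a single peeling pass. Using a bucket-based priority queue keyed on current degree, each node is removed once and each edge is examined a constant number of times when its endpoint's degree is updated, giving $\mathcal{O}(n+m)$ total work for the peeling. Recording the density (and edge count, needed for the loop condition $|E(G_i)|\geq h$) is an $\mathcal{O}(1)$ update per removal, so the final selection of the maximum-density $G_i$ costs $\mathcal{O}(n)$. Hence the overall running time is $\mathcal{O}(n+m)$, improving on the $\mathcal{O}(n(T_{\text{appr}}+h))$ bound of \Cref{alg:alg2} precisely because one peeling pass amortizes all $n$ calls to the node-constrained approximation.

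The main obstacle I expect is justifying that a single peeling sequence legitimately serves as the at least $i$-nodes approximation \emph{simultaneously} for every relevant $i$, in particular for the critical value $k$. This requires care because the Andersen guarantee is usually stated for a fixed target $k$; I would argue that since the peeling order is independent of $k$ and the algorithm returns the best subgraph with at least $k$ nodes seen along that fixed order, the $1/3$-bound holds for each $k$ against the same sequence. A secondary subtlety is confirming that feasibility is automatic here—every $G_i$ with $i\leq i_{\max}$ has at least $h$ edges by the loop invariant—so that the everywhere-sparseness repair argument of \Cref{theorem:lhedgesapprox} collapses to the trivial case and the approximation constant is clean.
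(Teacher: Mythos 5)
There is a genuine gap in the approximation argument. You invoke \Cref{lemma:lemma2} for the critical value $k$ (the minimum node count over all optimal at-least-$h$-edges subgraphs) and then claim that the best \emph{recorded} $G_i$ inherits the Andersen $1/3$-guarantee. But the subgraph in the peeling sequence that witnesses the $3$-approximation for the at-least-$k$-nodes DSP is some $G_{j^*}$ with $j^*\leq n-k$, and nothing forces $j^*\leq i_{\max}$: the witness may have fewer than $h$ edges and thus lie in the unrecorded tail $G_{i_{\max}+1},\ldots,G_{n-k}$, which \Cref{alg:alg3} never even constructs. Your statement that ``the add-edges repair step is never triggered'' only establishes that every recorded $G_i$ is \emph{feasible}; it does not establish that any recorded $G_i$ is \emph{dense enough}. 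Likewise, ``the algorithm returns the best subgraph with at least $k$ nodes seen along that fixed order'' misdescribes the algorithm, which returns the best subgraph with at least $h$ \emph{edges} --- a prefix of the peeling order that can stop before reaching the at-least-$k$-nodes witness. Consequently the claimed bound of $\tfrac{1}{3}d^*$ for the output is unsupported.

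The paper's proof spends its entire second case on exactly this situation: when the witness $H$ has fewer than $h$ edges, it uses the additional part of Andersen's guarantee that $|E(H)|$ is at least one third of the optimum (hence at least $h/3$), observes that adding at most $h$ edges and hence at most $2h=\mathcal{O}(|E(H)|)$ nodes restores feasibility, invokes everywhere-sparseness to bound the resulting density loss by a constant $c$, and --- crucially --- chooses the added edges to be $E(G_{i_{\max}})\setminus E(H)$ so that the repaired graph is $G_{i_{\max}}$ itself, which \emph{is} among the recorded candidates. That last step is what connects the repair argument back to what \Cref{alg:alg3} actually returns, and it also shows the final constant is some $c>3$ rather than $3$. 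Your treatment of the first case (witness already has $\geq h$ edges) and your running-time analysis via a bucket queue match the paper and are fine; the proof needs the second case added to be complete.
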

\begin{proof}
	\Cref{alg:alg3} peels away low degree nodes and thus obtains $G_0,\ldots,G_{i_{\max}}$.
Assume we similarly computed the remaining graphs $G_{i_{\max}+1},\ldots,G_n$ (as in the standard $k$-core decomposition).
\citet{andersen2009finding} showed that for each possible $j\in [n]$ one of the $G_0,\ldots,G_{n-j}$ 
is a 3-approximation for the at least $j$-nodes DSP, and the number of edges is at least $1/3$ of the optimal solution.

Now, let $k$ be the minimum number of nodes over all graphs that are densest subgraphs with at least $h$ edges. And, let $d(S)$ be the $3$-approximation of the at least $k$-nodes DSP.
Then, from \Cref{lemma:lemma2}, it follows that $d(S)$ is also an $3$-approximation of the at least $h$-edges DSP.
There are two cases:
\begin{enumerate}
	\item $G(S)$ contains at least $h$ edges, i.e., corresponds to one of the graphs in $\{G_0,\ldots,G_{i_{\max}}\}$. In this case, we are done, and $G(S)$ is a $3$-approximation for \lhedges.
	\item $G(S)$ contains less than $h$ edges (but at least 1/3 of the optimal solution), i.e., corresponds to a graph $H$ in $G_{i_{\max}+1},\ldots,G_{n-k}$. In this case, we need to add edges such that $G(S)$ is feasible. Let $S'$ be the resulting vertex set. With similar arguments as in the proof of \Cref{theorem:lhedgesapprox}, it follows that $d(S')\geq \frac{1}{c} d^*$ for 
	a large enough constant $c\in \mathbb{N}$ and with $d^*$ being the optimal density.
	Now, as we can choose the edges that we add to achieve feasibility, we choose exactly the edges in $E(G_{i_{\max}})\setminus E(H)$ such that $G(S')=G_{i_{\max}}$. Note that we add in the worst case at most $2h$ nodes. Hence, $G_{i_{\max}}$ is a $c$-approximation.
\end{enumerate}
As \Cref{alg:alg3} returns the $G_i\in\{G_0,\ldots,G_{i_{\max}}\}$ with maximum density, in either case, we obtain a $\mathcal{O}(1)$-approximation for \lhedges.
The running time complexity of \Cref{alg:alg3} is equal to the peeling-based $k$-core decomposition, which is $\mathcal{O}(|V|+|E|)$ as shown by~\citet{batagelj2003m}. 
\end{proof}

\subsection{Approximation of the at Least $\vect{h}$ Colored Edges DSP}  

We now can use \Cref{alg:alg2} or \Cref{alg:alg3} to obtain a  $\mathcal{O}(1)$-approximation for the \lhecdspopt problem in everywhere sparse graphs as shown in \Cref{alg:approxlh}.

\begin{algorithm}[htb]
	\label[algorithm]{alg:approxlh}
	\caption{Approximation for \lhecdspopt}
	\Input{Everywhere sparse graph $G=(V,E)$ and $\vect{h} \in\mathbb{N}^\pi$}
	\Output{Approx.~of densest subgr.~with at least $h_i$ edges of color $i\in[\numcolors]$}
	Approximate the densest subgraph $G'=(V',E')$ with at least $\sum_{i=1}^{\numcolors}h_i$ edges\;
	Let $f_i$ be the number of edges of color $i$ in $G'$ for $i\in[\numcolors]$\;
	$G''\gets G'$\;
	\For{$1\leq i\leq \numcolors$}{
		Add $\max\{0,h_i-f_i\}$ edges of color $i$ to $G''$\;
	}
	\Return $G''$
\end{algorithm}

\begin{theorem}
	\Cref{alg:approxlh} gives an $\mathcal{O}(1)$-approximation for \lhecdspopt. %
\end{theorem}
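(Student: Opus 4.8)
The plan is to reduce the colored problem to the (non-colored) \lhedges problem and then repair feasibility by adding the few missing colored edges, exactly mirroring the argument in the proof of \Cref{theorem:lhedgesapprox}. First I would let $d^*$ denote the optimal density for \lhecdspopt and let $T^*$ be an optimal solution, so $G(T^*)$ contains at least $h_i$ edges of color $i$ for every $i\in[\numcolors]$ and in particular $|E(T^*)|\geq \sum_{i=1}^{\numcolors} h_i =: h$. Thus $T^*$ is a \emph{feasible} solution for the at least $h$ (non-colored) edges problem, which means the optimal density $d^*_{h}$ of \lhedges on $G$ satisfies $d^*_{h}\geq d(T^*)=d^*$. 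Invoking \Cref{alg:alg2} (or \Cref{alg:alg3}) as the first line of \Cref{alg:approxlh}, we obtain a subgraph $G'=(V',E')$ that is an $\mathcal{O}(1)$-approximation for \lhedges, so $d(V')=\tfrac{|E'|}{|V'|}\geq \tfrac{1}{c'} d^*_{h}\geq \tfrac{1}{c'} d^*$ for a constant $c'$, and $|E'|\geq h$.

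The second step is to account for the edge-addition phase. The graph $G'$ already has at least $h=\sum_i h_i$ edges, but it may be \emph{infeasible} for the colored problem because some color $i$ is underrepresented, i.e., $f_i<h_i$. For each such color we add $\max\{0,h_i-f_i\}$ edges of color $i$, so the total number of edges added is at most $\sum_{i=1}^{\numcolors}\max\{0,h_i-f_i\}\leq \sum_{i=1}^{\numcolors} h_i = h$. As in the proof of \Cref{theorem:lhedgesapprox}, adding at most $h$ edges introduces at most $2h$ new nodes, yielding $G''=(V'',E'')$ with $|V''|\leq |V'|+2h$ and $|E''|\geq |E'|\geq h$. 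Crucially, $G''$ is now feasible for \lhecdspopt since every color requirement is met by construction.

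The third step is the density bookkeeping, where I would again lean on the everywhere-sparseness of $G$. Just as in \Cref{theorem:lhedgesapprox}, the approximation guarantee supplies a constant $c_1$ with $c_1|E'|\geq h$ (this is the same constant-approximation-on-edge-count property used there), so the $2h$ added nodes can be charged against $|V'|$: we get $|V''|\leq |V'|+2h\leq |V'|+2c_1|E'|\leq c|V'|$ for a large enough constant $c\in\mathbb{N}$, using sparseness to bound $|E'|$ by $\mathcal{O}(|V'|)$. Combining the pieces,
\[
d(V'')=\frac{|E''|}{|V''|}\geq \frac{|E'|}{c|V'|}=\frac{1}{c}\,d(V')\geq \frac{1}{c\,c'}\,d^*,
\]
which is an $\mathcal{O}(1)$-approximation.

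I expect the main obstacle to be the key inequality $d^*_{h}\geq d^*$ linking the two problems, i.e., justifying that an optimal colored solution is automatically a feasible (though not necessarily optimal) uncolored \lhedges solution because its total edge count is at least $\sum_i h_i$. This direction is clean, but one must be careful that the reverse is \emph{not} claimed: the uncolored optimum need not satisfy the per-color demands, which is precisely why the edge-addition repair step is needed, and the constants $c$ and $c_1$ from \Cref{theorem:lhedgesapprox} must be verified to still apply after the repair (in particular that the $\mathcal{O}(1)$ factor degrades only multiplicatively). The everywhere-sparseness assumption is what keeps the node blow-up from the added edges under control, so I would make explicit that it is used exactly as in \Cref{theorem:lhedgesapprox}.
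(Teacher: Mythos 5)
Your proposal is correct and follows essentially the same route as the paper: observe that \lhedges with $h=\sum_i h_i$ relaxes \lhecdspopt (so the uncolored approximation already gives density at least $d^*/\alpha$), then bound the repair step by noting that at most $\sum_i h_i \leq |E'|$ edges and hence at most $2|E'| = \mathcal{O}(|V'|)$ new nodes are added, using everywhere sparseness. The only cosmetic difference is your invocation of a constant $c_1$ with $c_1|E'|\geq h$; this is unnecessary here since $G'$ is a feasible \lhedges output and so $|E'|\geq h$ holds directly, which is exactly the inequality the paper uses.
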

\begin{proof}
	Let $G''=(V'', E'')$ be the final resulting subgraph. 
	Because \lhedges relaxes \lhecdspopt, we have $d(V')\geq \frac{d^*}{\alpha}$ using an $\alpha$-approximation for the at least $h$ edges subgraph, and where $d^*$ is the optimal density of \lhedges.
	We add in total 
	$$\ell=\sum_{i=1}^{\numcolors}\max\{0,h_i-f_i\}\leq \sum_{i=1}^{\numcolors}h_i \leq |E'|$$ 
	edges to $G''=(V'',E'')$ to ensure feasibility for \lhecdspopt.
	Because each edge adds at most two vertices and $|V''|\leq |V'|+2\ell\leq |V'|+2|E'|\leq c|V'|$ with $3\leq c\in\mathbb{N}$ and it follows
	$
	d(V'')=\frac{|E''|}{|V''|}\geq \frac{|E'|}{c|V'|}=\frac{1}{c}d(V')\geq \frac{1}{c\alpha}d^*\text{.}\qedhere
	$
\end{proof}

In \Cref{alg:approxlh}, after obtaining $G'$ from \Cref{alg:alg3} we might need to add missing edges. 
Note that $G'$ is node-induced, and we have to insert at least one new node. Of course, adding all edges of the new node to $G''=(V'',E'')$ only improves the density.
To add the missing nodes, we store the nodes that are removed in the call of \Cref{alg:alg3} that lead to missing edges.
Therefore, we use a slightly modified version \Cref{alg:alg3} as a subroutine to approximate the at least $h$ edges DSP.
During the iterations of the while-loop in \Cref{alg:alg3}, vertices and their incident edges are removed from the graph. 
Here, if in iteration $i$ we have to remove an edge $e=\{u,v\}$ with $c(e)=c$ and the remaining edges of color $c$ is smaller than~$h_c$, i.e., removing $e$ leads to a deficit of color $c$ edges, 
then we add both endpoints of $e$ to a set $B_i$, where $B_i=B_{i-1}\cup\{u,v\}$ (and $B_0=\emptyset$).
At the end of the subroutine, we return $G_i=(V_i,E_i)$ for $i\in\{0,\ldots,i_{\max}\}$ with maximum density and the corresponding set $B_i$.
Note that $|B_i|\leq 2\ell$, i.e., for each missing colored edge, at most two nodes are in $B_i$.
Therefore, we can just add the nodes in $B_i$ and return as a final result of \Cref{alg:approxlh} the graph $G''=(V'\cup B_i, E')$, which leads to a running time of $\mathcal{O}(n+m)$ of \Cref{alg:approxlh}.

\subsection{Graphs with Multiple Edge Colors}\label{sec:manycolors}
Up to this point, our focus was on graphs with single-colored edges. However, in practical contexts, edges often bear multiple distinct colors, each representing diverse aspects of node interactions. Discovering the densest subgraph that illuminates these varying types of node interactions holds intrinsic value. 
To accommodate this case, 
we transform a simple graph with multiple edge colors into a colored multi\-graph, where each edge is assigned a single color. 

An undirected edge-colored multi\-graph $M$ is defined as a tuple $M=(V, E_M, f, c_M)$ where $V$ is a finite set of vertices, $E_M$ is a finite set of edges, 
$f :E_M \rightarrow \{\{u,v\} : u, v\in V \}$ 
is a function mapping edges to pairs of vertices, and $c_M:E_M\rightarrow \mathbb{N}$ is a function assigning to each edge a single color.
If $e_1, e_2 \in E_M$ and $f(e_1) = f(e_2)$, then we call $e_1$ and $e_2$ multiple or parallel edges. 
For a subset $S$ of vertices $S\subseteq V$, we define $E_M(S)=\{ e \in E_M \mid  f(e) \subseteq S \}$ 
and $M(S)=(S, E_M(S),f,c_M)$ the multi\-graph induced by $S$. The density of $M(S)$ is defined as $d(S)=\frac{|E_M(S)|}{|S|}$.

Given an edge-colored simple graph $G = (V, E, c)$ with multiple edge colors, we construct an associated multi\-graph $M = (V, E_M, f, c_M)$ sharing the same set of nodes, $V$. For each edge $e \in E$ and for each color $i \in c(e)$, we introduce an edge $e'$ into $E_M$ and assign $c_M(e')$ to be equal to $i$. %
If $G$ is everywhere sparse, then it follows that $M$ is also everywhere sparse if we consider the number of distinct colors $\numcolors$ to be a small constant, which is commonly the case for real-world networks (see, e.g., \Cref{table:datasets_stats}).
Furthermore, note that the density of the obtained multi\-graph can be larger than that of the original graph.
By counting the parallel edges separately, we account for the fact that edges with many colors have higher importance as they cover more of the color requirements compared to edges with single or few colors.

The insights and results derived 
in previous sections extend seamlessly to this multi\-graph framework,
resulting in \Cref{alg:approxmulti} and the following theorem.

\begin{theorem}\label{theorem:multicolor}
    \Cref{{alg:approxmulti}} gives an $\mathcal{O}(1)$-approximation for \lhecdspopt on graphs with multiple edge colors.
\end{theorem}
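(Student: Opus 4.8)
The plan is to reduce the multiple-color problem on the simple graph $G=(V,E,c)$ to the single-color problem on the associated multigraph $M=(V,E_M,f,c_M)$ of \Cref{sec:manycolors}, and then reuse the analysis behind \Cref{alg:approxlh}. First I would observe that, under that construction, an edge $e$ of $G$ with $i\in c(e)$ contributes exactly one color-$i$ parallel edge to $M$; hence a vertex set $S\subseteq V$ satisfies the ``at least $h_i$ edges of color $i$'' requirement in $G$ if and only if the induced multigraph $M(S)$ contains at least $h_i$ color-$i$ edges. Since the density optimized in the multiple-color setting is the multigraph density $d(S)=|E_M(S)|/|S|$ (counting multiplicities), solving \lhecdspopt on $G$ is equivalent to solving the single-color \lhecdspopt on $M$, and \Cref{alg:approxmulti} is exactly \Cref{alg:approxlh} run on $M$.

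Next I would verify that the preconditions of the single-color machinery survive the transformation. The crucial point is everywhere sparseness: for any $V'\subseteq V$ we have $|E_M(V')|\le\numcolors\,|E(V')|$, because each simple edge spawns at most $\numcolors$ parallel copies. As $G$ is everywhere sparse and $\numcolors$ is a constant, $|E_M(V')|=\bigO(|V'|)$, so $M$ is everywhere sparse as well. I would then invoke \Cref{lemma:upperbound} with $p=\numcolors$, since $M$ has at most $\numcolors$ parallel edges between any pair of nodes; this supplies the node lower bound $\ell(\sum_i h_i,\numcolors)$ for the at least $\sum_i h_i$ edges subroutine (\Cref{thm: thm5} adapted to multigraphs), which remains a linear-time $\bigO(1)$-approximation.

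The core estimate then mirrors the proof of the single-color case. I would let $d^*$ be the optimal multigraph density of the colored problem, and note that \lhedges with $h=\sum_i h_i$ is a relaxation of it, so the $\alpha$-approximate subgraph $G'=(V',E')$ returned by the edge-count subroutine satisfies $d(V')\ge d^*/\alpha$. To restore feasibility, \Cref{alg:approxmulti} inserts, for each color $i$, at most $\max\{0,h_i-f_i\}$ parallel edges; the total inserted is $\ell=\sum_i\max\{0,h_i-f_i\}\le\sum_i h_i\le|E'|$, and since each inserted edge brings at most two new vertices, $|V''|\le|V'|+2\ell\le|V'|+2|E'|\le c\,|V'|$ for a large enough constant $c$ by everywhere sparseness of $M$. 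This yields
\[
d(V'')=\frac{|E''|}{|V''|}\ge\frac{|E'|}{c\,|V'|}=\frac{1}{c}\,d(V')\ge\frac{1}{c\alpha}\,d^*,
\]
so \Cref{alg:approxmulti} is a $\bigO(1)$-approximation.

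The step I expect to require the most care is the everywhere-sparseness claim for $M$: it hinges on treating $\numcolors$ as a constant, and I would make explicit that the hidden constant in the approximation ratio degrades linearly in $\numcolors$, both through the sparseness bound $|E_M(V')|\le\numcolors\,|E(V')|$ and through the multigraph lower bound $\ell(\cdot,\numcolors)$. The remaining bookkeeping---checking that the parallel edges added to fix color deficits are exactly those already removed by the peeling subroutine, as in the modified version of \Cref{alg:alg3} described for \Cref{alg:approxlh}---is routine once the multigraph reduction is in place, and it preserves the overall $\bigO(n+m)$ running time.
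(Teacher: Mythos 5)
Your proposal is correct and follows essentially the same route as the paper: transform to the everywhere-sparse colored multigraph, invoke the generalized lower bound $\ell(h,\numcolors)$ and the multigraph adaptations of the subroutines, and then repeat the density chain $|V''|\leq |V'|+2\ell\leq c|V'|$ from the single-color analysis, noting that adding all parallel edges between a chosen pair only helps the density. The only point the paper makes that you gloss over is why the $3$-approximation subroutine (and the assumption $c_1|E(S)|\geq|E(S^*)|$) still holds on multigraphs --- the paper justifies this by viewing the multigraph as an edge-weighted graph --- but your explicit tracking of how the constant degrades linearly in $\numcolors$ is a welcome addition.
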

\begin{algorithm}[htb!]
	\label[algorithm]{alg:approxmulti}
	\caption{Approximation~for \lhecdspopt on graphs with multiple edge colors}
	\Input{Everywhere sparse graph $G=(V,E,c)$ and $\vect{h}\in\mathbb{N}^\pi$}
	\Output{Approximation~of densest subgraph~with at least $h_i$ edges of color $i$ for $i\in[\numcolors]$}
	Transform $G=(V,E,c)$ into multi\-graph $M=(V,E_M,f,c_M)$\;
	Use Line 1-3 from \Cref{alg:approxlh} applied to $M$\; %
	\While{there exists $i \in [\numcolors] \text{ such that } f_i < h_i$}
	{choose two nodes $u,v\in V$ that are connected by an edge of color $i$ and add all parallel edges between $u,v$ to $G^{''}$ \;
		Update $f_i$ for $i \in [\numcolors]$}
	\Return $G''$
\end{algorithm}

\section{Experiments}
In this section, we evaluate our algorithms in terms of their efficacy and efficiency by discussing the following \textbf{research questions:}
\begin{itemize}[leftmargin=4.5mm]
	\item[\textbf{Q1:}] How does our approximation for at least $h$ edges problem perform in terms of approximation quality?
	\item[\textbf{Q2:}] How does increasing $h$ affect the density and running time?
	\item[\textbf{Q3:}] How are the colors in the data sets and the unconstrained densest subgraphs distributed? 
	\item[\textbf{Q4:}] How does our approximation for the at least~$\vect{h}$ color edges problem perform in terms of approximation quality?
	\item[\textbf{Q5:}] How do increasing color requirements affect the densities?
	\item[\textbf{Q6:}] How is the efficiency of our approximation for the at least~$\vect{h}$ color edges problem in terms of running time?
\end{itemize}

Additionally, we discuss in \Cref{sec:usecase}, as a use case, the identification of densest subgraphs that contain publications at popular data mining conferences in a coauthor graph.

\smallskip
\noindent
\textbf{Algorithms: }
We use the following algorithms for our evaluation.
\begin{itemize}
	\item \algatleasthapprox is the implementation of \Cref{alg:approxlh} for the at least $h$ edges DSP. 
	\item \algcolorapprox is the implementation of \Cref{alg:approxlh} for the at least $h$ colored edges DSP.	

	\item \algatleasthilp and \algcolorILP are the exact integer linear programs for finding the densest subgraph with at least $h$ edges and the densest subgraph with at least $\vect{h}$ colored edges, respectively. The ILPs are provided in \Cref{appendix:ilps}. 
	\item \algbaseline is a new baseline algorithm. As there is no baseline available for the at least $\vect{h}$ colored edges problem (\Cref{def:coloreddsp}), we introduce a heuristic, serving for comparison and benchmarking.
	Given an edge colored graph $G=(V,E, c)$, the heuristic peels away nodes with the lowest degree as long as all color requirements are fulfilled.
	\item \algmlds is a state-of-the-art approximation for the multilayer densest subgraph problem defined by~\citet{galimberti2017core}. The authors provided the source code.
\end{itemize}
We implemented our algorithms in C++ using GNU g++ compiler 11.4.0. with the flag \texttt{--O3}. The ILPs were implemented in Python 3.9 and ran with Gurobi~9.1.2. The experiments ran on a single machine with an Intel i5-1345U CPU and 32GB of main memory. The source code is available at~\textcolor{blue}{\url{https://gitlab.com/densest/diverse}}.

\smallskip
\noindent
\textbf{Data Sets:}
We use twelve real-world data sets from different domains and a wide range of different numbers of attributes, i.e., colors.
\Cref{table:datasets_stats} gives an overview of the statistics. We provide detailed descriptions in \Cref{appendix:datasets}. 

\begin{table}[htb]
	\centering
	\caption{Statistics of the data sets. $d(S^*)$ denotes the density of the unconstrained optimal densest subgraph and $\psi(G)$ denotes the maximal number of colors per edge.}  
	\label{table:datasets_stats}
	\resizebox{1\linewidth}{!}{ \renewcommand{\arraystretch}{1} \setlength{\tabcolsep}{4pt}
		\begin{tabular}{lrrrrrrr}\toprule
			\textbf{Data set}        &   $|V(G)|$      & $|E(G)|$ & $d(S^*)$ &  \#Colors & $\psi(G)$ & Category & Ref.
			\\ \midrule
			
			\emph{AUCS}      & 61 & 353 & 6.2 & 5 & 5 & Multilayer social & \cite{magnani2013combinatorial}\\
			\emph{Hospital}  & 75 & 1\,139 & 16.3 & 5 & 5 & Temporal face-to-face & \cite{vanhems2013estimating}\\
			\emph{HtmlConf}  & 113 & 2\,196 & 20.5 & 3 & 3 & Temporal face-to-face & \cite{isella2011s}\\
			\emph{Airports}  & 417 & 2953 & 16.5 & 37 & 5 & Multilayer transportation & \cite{cardillo2013emergence}\\
			\emph{Rattus}    & 2\,634 & 3\,677 & 3.7 & 6 & 4 & Multilayer biological & \cite{de2015structural}\\
			\emph{FfTwYt}    & 6\,401 & 60\,583 & 39.5 & 3 & 3 & Multilayer social  & \cite{DickisonMagnaniRossi2016}\\
			\emph{Knowledge} & 14\,505  & 210\,946    & 34.8 & 30 & 4 & Knowledge graph  & \cite{toutanova2015representing}\\
			\emph{HomoSap}   & 18\,190 & 137\,659 & 38.5 & 7 & 5 & Multilayer biological & \cite{galimberti2017core}\\
			\emph{Epinions}  & 131\,580 & 592\,013    & 85.6 & 2  & 2 & Signed (trust/no trust)  & \cite{leskovec2010signed}\\	
			\emph{DBLP}		 & 344\,814 & 1\,528\,399 & 57.0 & 168 & 21 & Multilayer collaboration & \cite{ley2002dblp}\\
			\emph{Twitter}   & 346\,573 & 1\,088\,260 & 45.1 & 2  & 2 & Signed (fact/non-fact)  &  \cite{shao2018anatomy} \\
			\emph{FriendFeed}& 505\,104 & 18\,319\,862 & 500.1 & 3 & 3 & Multilayer social & \cite{DickisonMagnaniRossi2016}\\
			\bottomrule
		\end{tabular}
	}
\end{table}

\subsection{Results and Discussion}

\answer{Q1--Approximation quality of \algatleasthapprox}
To evaluate the approximation error of \algatleasthapprox, we computed the at least $h$ edges DSP for the \emph{AUCS}, \emph{Hospital}, and \emph{HtmlConf} data sets using \algatleasthapprox and the exact ILP approach (\algatleasthilp) for all values of $w< h\leq |E(G)|$ with $w$ is the number of edges in the optimal unconstrained DSP.
\Cref{table:atleasthexact} shows the percentage of runs that were optimal, i.e., relative approximation error of zero, the percentage of the runs with a relative approximation error of at most one, 
and the statistics of the relative approximation errors in percent $(\%)$ for the runs that were not optimal. 
For \emph{Hospital} and \emph{HtmlConf} over $90\%$ and $80\%$, respectively, of the instances are solved perfectly by \algatleasthapprox, and all instances are solved with a relative error of less than one.
In the case of the \emph{AUCS} data set, this value is lower with $38.9\%$. However, here, the mean and median relative approximation errors are also less than one percent, and more than $93\%$ of the instances are solved with an error of at most one. The maximum relative error is at $1.24\%$.

\begin{table}[htb]
	\centering
	\caption{Results and relative approximation errors in percent (\%) for the at least $h$ edges DSP.}  
	\label{table:atleasthexact}
	\resizebox{1\linewidth}{!}{ \renewcommand{\arraystretch}{1} \setlength{\tabcolsep}{7pt}
		\begin{tabular}{lccrrrr}\toprule
			\multirow{3}{0.5cm}{}&&&\multicolumn{4}{c}{\textbf{Relative approximation errors} (\%)} \\
			\cmidrule(lr){4-7} 
			Data set & Opt.~solved (\%) & Within $1\%$ err.~(\%) & Mean & Std.~dev. & Median & Max.    
			\\ \midrule
			\emph{AUCS}      & 38.9 & 93.1 & 0.59 & 0.31 & 0.42 & 1.24 \\
			\emph{Hospital}  & 91.7 & 100 & 0.44 & 0.23 & 0.47 & 0.81 \\
			\emph{HtmlConf}  & 83.0 & 100 & 0.14 & 0.13 & 0.10 & 0.83 \\
			\bottomrule
		\end{tabular}
	}
\end{table}

\answer{Q2--Density and running times for increasing $h$}
We computed the at least $h$-edges DSP where we choose $h=w+i$ with $w$ being the number of edges in the unconstrained DSP and $1 \leq i \leq |E|-w$.
\Cref{fig:atleasthedges} shows the results for \emph{AUCS}, \emph{FfTwYt}, and \emph{FriendFeed}, and \Cref{fig:atleasthedges:appendix} in the appendix shows results for the remaining data sets. 
In all data sets, the densities of the subgraphs $S_i$ strongly decrease for larger~$i$. 
As the size of $h$ increases, the peeling process can stop earlier, which leads to shorter running times:
\Cref{table:atleast_rt} shows the running times for the four largest data sets and increasing ratios $r$ such that $i=r\,(|E|-w)$. The reported running times are in seconds, and the mean values and standard deviations are over ten repetitions. 
As expected, the running time decreases with increasing value of $i$.

\begin{figure}[htb]\centering
	\begin{subfigure}{.33\linewidth}
		\centering
		\includegraphics[width=1\linewidth]{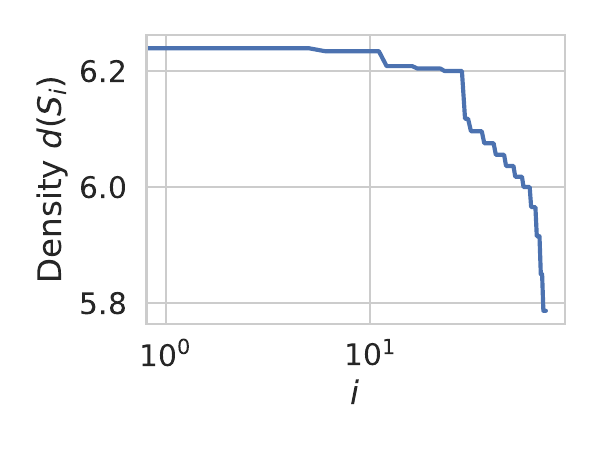}
		\caption{\emph{AUCS}}
		\label{fig:atleasthedgesaucs}
	\end{subfigure}\hfil%
	\begin{subfigure}{.33\linewidth}
		\centering
		\includegraphics[width=1\linewidth]{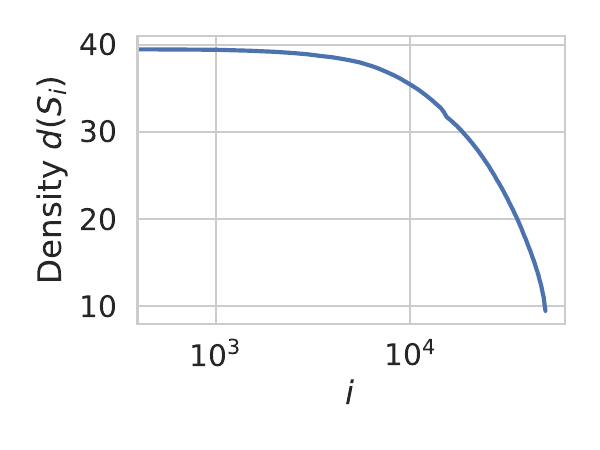}
		\caption{\emph{FfTwYt}}
		\label{fig:atleasthedgesfftwyt}
	\end{subfigure}
	\begin{subfigure}{.33\linewidth}
		\centering
		\includegraphics[width=1\linewidth]{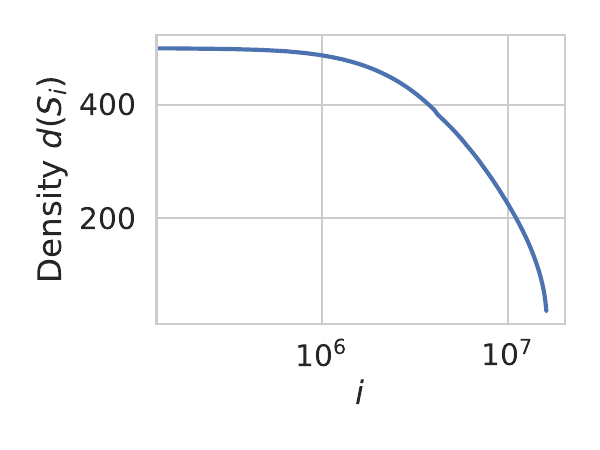}
		\caption{\emph{FriendFeed}}
		\label{fig:atleasthedgesfriendfeed}
	\end{subfigure}
	\setlength{\belowcaptionskip}{-8pt}
	\caption{The density computed with \algatleasthapprox for increasing numbers of required edges.}
	\label{fig:atleasthedges}
\end{figure}

\begin{table}[htb]
	\centering
	\caption{Mean running times and standard deviations of \algatleasthapprox in seconds (s).}  
	\label{table:atleast_rt}
	\resizebox{1\linewidth}{!}{ \renewcommand{\arraystretch}{1} \setlength{\tabcolsep}{5pt}
		\begin{tabular}{lrrrrrrrrr}\toprule
			\multirow{3}{0.5cm}{}&\multicolumn{9}{c}{$i=r\cdot(|E|-w)$} \\
			\cmidrule(lr){2-10} 
			Data set 		& $r=0.1$ & $r=0.2$ & $r=0.3$ & $r=0.4$ & $r=0.5$ & $r=0.6$ & $r=0.7$ & $r=0.8$ & $r=0.9$ \\ \midrule
			\emph{Epinions}    & 0.31\spm{0.0} & 0.28\spm{0.0} & 0.25\spm{0.0} & 0.22\spm{0.0} & 0.19\spm{0.0} & 0.16\spm{0.0} & 0.14\spm{0.0} & 0.10\spm{0.0} & 0.07\spm{0.0}  \\
			\emph{DBLP}        & 1.28\spm{0.0} & 1.17\spm{0.0} & 1.04\spm{0.0} & 0.92\spm{0.0} & 0.81\spm{0.0} & 0.69\spm{0.0} & 0.56\spm{0.0} & 0.43\spm{0.0} & 0.29\spm{0.0} \\
			\emph{Twitter}     & 0.84\spm{0.0} & 0.76\spm{0.0} & 0.69\spm{0.0} & 0.63\spm{0.0} & 0.56\spm{0.0} & 0.49\spm{0.0} & 0.42\spm{0.0} & 0.32\spm{0.0} & 0.19\spm{0.0} \\
			\emph{FriendFeed}  & 18.99\spm{0.5} & 16.73\spm{0.4} & 14.45\spm{0.4} & 12.31\spm{0.5} & 9.85\spm{0.3} & 7.74\spm{0.3} & 5.66\spm{0.2} & 3.90\spm{0.4} & 2.02\spm{0.1} \\
			\bottomrule
		\end{tabular}
	}
\end{table}

\answer{Q3--Distribution of Colors in Unconstrained DSP}
We empirically verify the necessity of diversity in edge-colored graphs and the densest subgraphs by assessing the distribution of colors in the graphs and densest subgraphs. The findings consistently show that the fractions of the different colors are not equal in the data sets.
Furthermore, often, the distribution of the colors in the unconstrained DSP differs significantly from the distribution in the complete graph.
\Cref{fig:colors} shows the distributions of the \emph{Knowledge} and all data sets with two colors. 
\Cref{fig:knowledgecolorsgraph} and \Cref{fig:knowledgecolorsdsp} show the color distributions in the \emph{Knowledge} data set and the (unconstrained) densest subgraph. The distribution between the graph and the DSP differs for most colors.
Similarly, we see in the bichromatic data sets significant differences in the fractions of colors between the complete graph and the densest subgraph.
Hence, even if there are many edges of a specific color in a graph, this does not generally mean that the densest subgraph contains a particular number of edges of this color. This validates the motivation for our new problems and algorithms.
Moreover, even if the distributions of the colors in the graph and the DSP are similar, we still might want to  find a subgraph with specific numbers of edges of given colors.

\begin{figure}[htb]\centering
	\begin{subfigure}{.33\linewidth}
		\centering
		\includegraphics[width=1\linewidth]{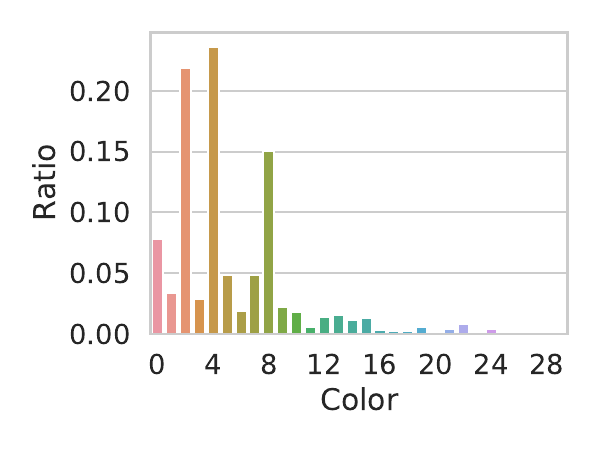}
		\caption{\emph{Knowledge} (Graph)}
		\label{fig:knowledgecolorsgraph}
	\end{subfigure}\hfil%
	\begin{subfigure}{.33\linewidth}
		\centering
		\includegraphics[width=1\linewidth]{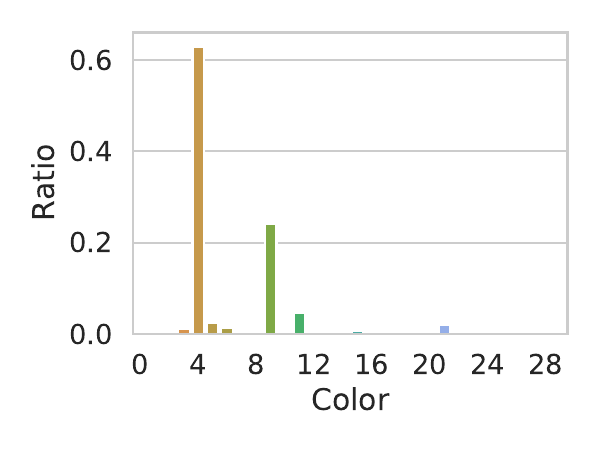}
		\caption{\emph{Knowledge} (DSP)}
		\label{fig:knowledgecolorsdsp}
	\end{subfigure}\hfil%
	\begin{subfigure}{.33\linewidth}
		\centering
		\includegraphics[width=1\linewidth]{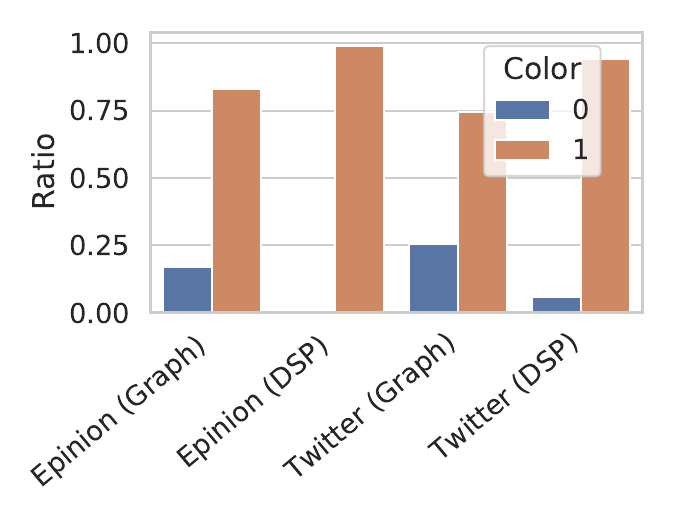}
		\caption{Binary data sets}
		\label{fig:binarygraphscolors}
	\end{subfigure}%
	\caption{The distributions of colors in various data sets.}
	\label{fig:colors}
\end{figure}

\answer{Q4--Approximation quality of \algcolorapprox}
To evaluate the approximation quality of \algcolorapprox, we first computed the at least $h$ colored edges DSP for the \emph{AUCS}, \emph{Hospital}, and \emph{HtmlConf} data sets using \algbaseline, \algcolorapprox, and the exact ILP approach (\algcolorILP).
We chose 100 random problem instances. To this end, let $f_i$ and $g_i$ be the number of edges of color $i$ in the unconstrained DSP and graph $G$, respectively.
For each color $c\in[\numcolors]$, we chose $h_c$ uniformly at random from the interval~$[f_c,g_c]$.
Furthermore, let $\lambda=\frac{\sum_{c\in[\numcolors]}h_c}{\sum_{c\in[\numcolors]}(g_c-f_c)}$ denote the fraction of edges that are required from all possible additional edges.
\Cref{fig:colorsscatter} shows the densities of the solved problem instances with respect to $\lambda$. 
Moreover, \Cref{table:atleasthcolorsexact} shows the statistics of the relative approximation errors in percent as well as the percentage of instances solved optimally or within a relative error of at most one.
We see that \algcolorapprox solves both more instances optimally and in the range of an error of at most one than \algbaseline.
The relative approximation errors are generally lower for \algcolorapprox with mean values (much) smaller than one and maximum values of at most 1.89\%.

\begin{table}[htb]
	\centering
	\caption{Results and relative approximation errors in percent (\%) for the at least $\vect{h}$ colored edges DSP.}  
	\label{table:atleasthcolorsexact}
	\resizebox{1\linewidth}{!}{ \renewcommand{\arraystretch}{1} \setlength{\tabcolsep}{4pt}
		\begin{tabular}{llccrrrr}\toprule
			\multirow{3}{0.5cm}{}&&&&\multicolumn{4}{c}{\textbf{Relative approximation errors} (\%)} \\
			\cmidrule(lr){5-8} 
			Algorithm& Data set& Opt.~solved (\%) & Within $1\%$ err.~(\%) & Mean & Std.~dev. & Median & Max.    
			\\ \midrule

						&\emph{AUCS}     & 21 & 76 & 0.65 & 0.82 & 0.27 & 3.58 \\
			\algbaseline&\emph{Hospital} &  1 & 19 & 2.64 & 1.94 & 2.14 & 9.10 \\
						&\emph{HtmlConf} &  6 & 62 & 0.99 & 0.87 & 0.79 & 3.69 \\		
			\cmidrule(lr){1-8}
							&\emph{AUCS}     & 22 & 95 & 0.30 & 0.34 & 0.15 & 1.30 \\
			\algcolorapprox	&\emph{Hospital} &  7 & 74 & 0.76 & 0.41 & 0.72 & 1.74 \\
							&\emph{HtmlConf} &  9 & 86 & 0.46 & 0.40 & 0.36 & 1.89 \\
			\bottomrule
		\end{tabular}
	}
\end{table}

\begin{figure}[htb]\centering
	\includegraphics[width=0.5\linewidth]{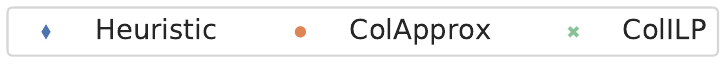}\vspace{-0.5mm}\\
	\begin{subfigure}{.333\linewidth}
		\centering
		\includegraphics[width=1\linewidth]{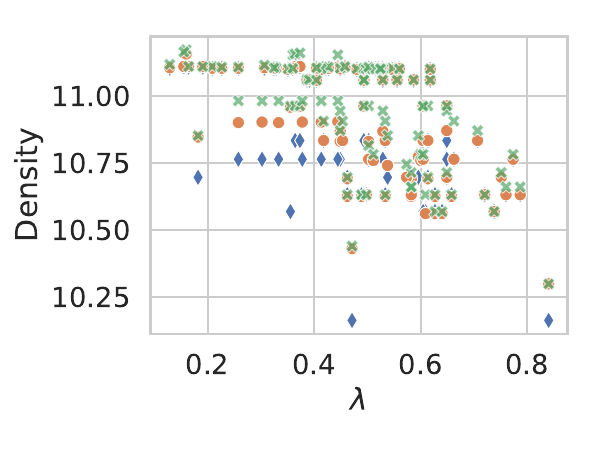}
		\caption{\emph{AUCS}}
		\label{fig:aucs_scatter}
	\end{subfigure}\hfil%
	\begin{subfigure}{.333\linewidth}
		\centering
		\includegraphics[width=1\linewidth]{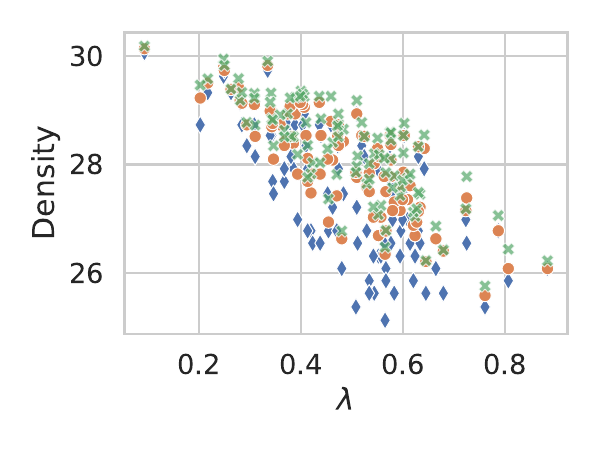}
		\caption{\emph{Hospital}}
		\label{fig:hospital_scatter}
	\end{subfigure}\hfil%
	\begin{subfigure}{.333\linewidth}
		\centering
		\includegraphics[width=1\linewidth]{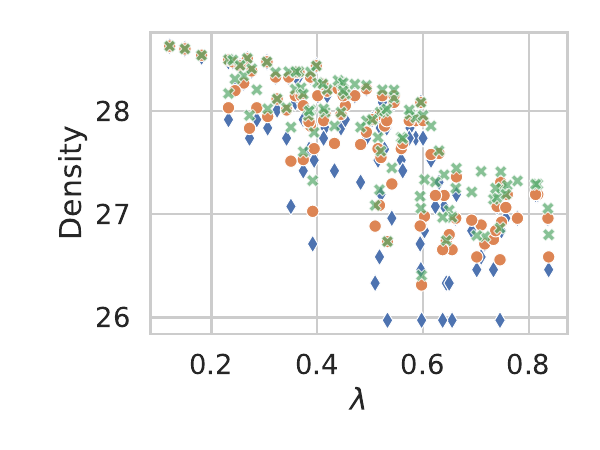}
		\caption{\emph{HtmlConf}}
		\label{fig:htmlconf·_scatter}
	\end{subfigure}\hfil%
	\caption{Comparison of the heuristic, approximation algorithm, and exact ILP.}
	\label{fig:colorsscatter}
\end{figure}

\answer{Q5--Density for increasing color requirements}
We computed the densities of the densest subgraph for increasing color requirements using \algcolorapprox and \algbaseline.
To this end, we first computed the unconstrained DSP $H=(S,F)$.
Let $f_c$ be the number edges of colors $c\in[\numcolors]$ in $F$, and $t_c$ be the total number of color $c\in[\numcolors]$ edges in $G$.
For each color $c\in[\numcolors]$, we split define $r_c=(t_c-f_c)/10$.
We then defined $h^i_c=i\cdot r_c$ for $i\in[10]$, leading to ten color requirement vectors $h^i$ with $i\in[10]$. 
\Cref{fig:colorfraction} shows the densities computed with \algcolorapprox and \algbaseline.
For increasing numbers of required edges, the densities decrease.  
Compared to the \algbaseline, our approximation algorithm \algcolorapprox leads to higher or at least as high densities for all data sets.
For some data sets, e.g., \emph{AUCS}, \emph{HtmlConf}, \emph{Epinions}, and \emph{FriendFeed}, the \algbaseline performs similarly good as our approximation. 
The reason is that for these data sets, all required colored edges are in the densest subgraphs.
As the heuristic peels away nodes while the color requirements are not violated, the densest, or close to the densest, subgraph can be obtained in many cases. 
Also see \textbf{Q4} for a comparison between \algbaseline and \algcolorapprox with the optimal solutions, showing that for random instances \algcolorapprox consistently outperforms \algbaseline.
Furthermore, \algbaseline is bound to fail if the color requirements are violated early in the peeling process.
In the following, we additionally show how the baseline fails in this case. To this end, we modified each data set by adding two nodes connected by a single edge of a new additional color. 
The results for \algbaseline are shown in \Cref{fig:colorfraction} labeled \algbaselinemod. Because the nodes of the additional edge have a degree of one,
the heuristic will try to remove them early on. But because the color of the new edge is required in the solution, the nodes cannot be removed, and \algbaseline stops processing the graph, leading to much lower densities compared to \algcolorapprox.
For \algcolorapprox, the density only changes minimally by the one additional edge and two additional nodes that need to be considered.

\begin{figure*}[htb]\centering
	\begin{subfigure}{.16\linewidth}
		\centering
		\includegraphics[width=1\linewidth]{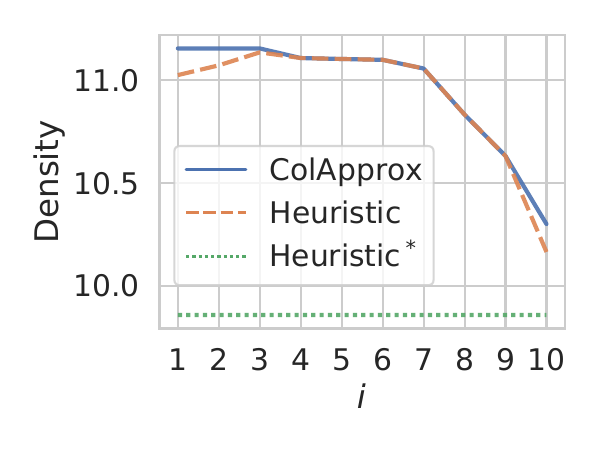}
\vspace{-1.4\baselineskip}
		\caption{\emph{AUCS}}
		\label{fig:colorfractionaucs}
	\end{subfigure}\hfil%
	\begin{subfigure}{.16\linewidth}
		\centering
		\includegraphics[width=1\linewidth]{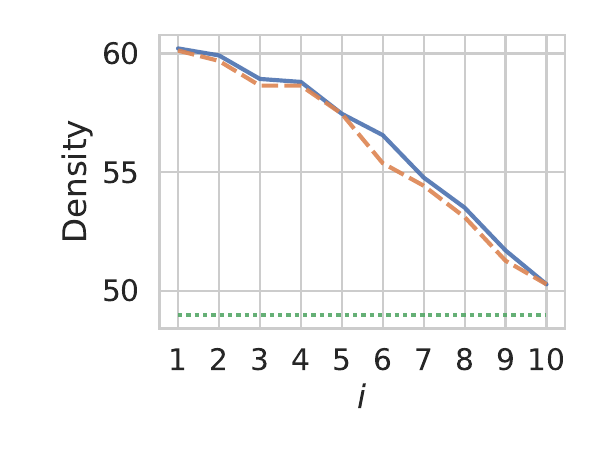}
\vspace{-1.4\baselineskip}
		\caption{\emph{Hospital}}
		\label{fig:colorfractionhospital}
	\end{subfigure}\hfil%
	\begin{subfigure}{.16\linewidth}
		\centering
		\includegraphics[width=1\linewidth]{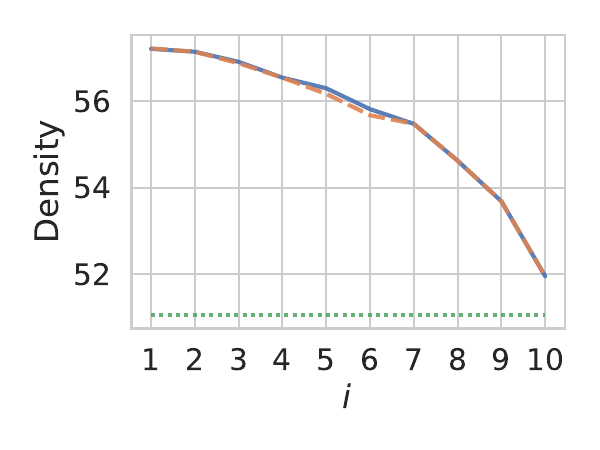}
\vspace{-1.4\baselineskip}
		\caption{\emph{HtmlConf}}
		\label{fig:colorfractionhtmlconf}
	\end{subfigure}\hfil%
	\begin{subfigure}{.16\linewidth}
		\centering
		\includegraphics[width=1\linewidth]{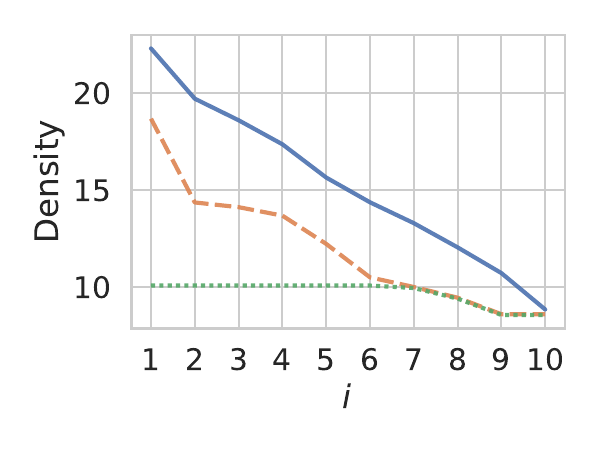}
\vspace{-1.4\baselineskip}
		\caption{\emph{Airports}}
		\label{fig:colorfractionairports}
	\end{subfigure}\hfil%
	\begin{subfigure}{.16\linewidth}
		\centering
		\includegraphics[width=1\linewidth]{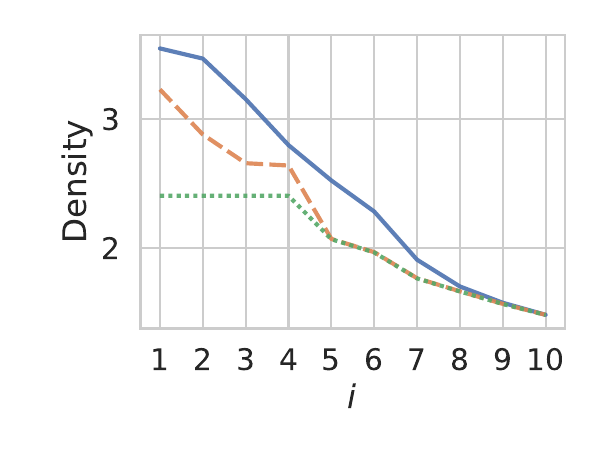}
\vspace{-1.4\baselineskip}
		\caption{\emph{Rattus}}
		\label{fig:colorfractionrattus}
	\end{subfigure}
	\begin{subfigure}{.16\linewidth}
		\centering
		\includegraphics[width=1\linewidth]{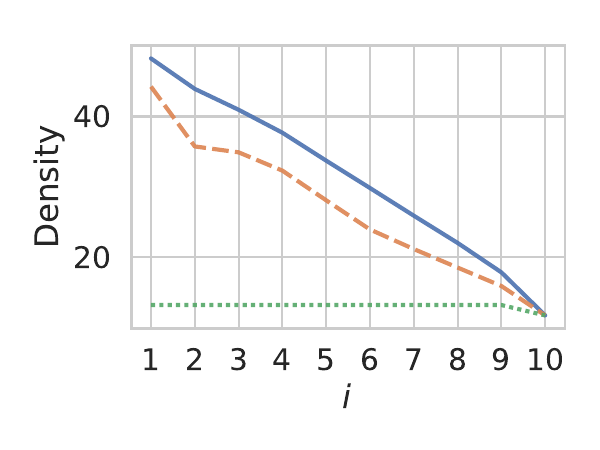}
\vspace{-1.4\baselineskip}
		\caption{\emph{FfTwYt}}
		\label{fig:colorfractionfftwyt}
	\end{subfigure}
	\begin{subfigure}{.16\linewidth}
		\centering
		\includegraphics[width=1\linewidth]{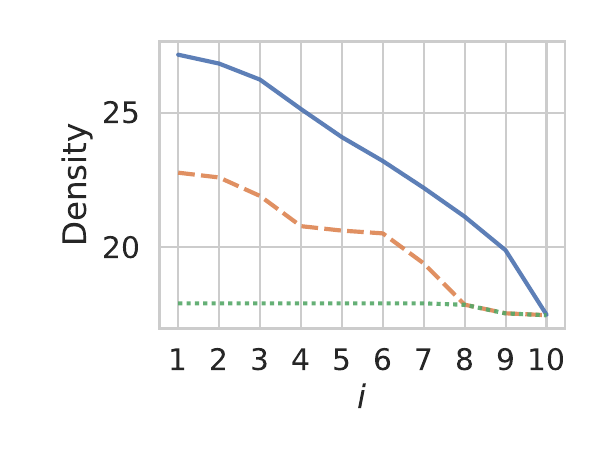}
\vspace{-1.4\baselineskip}
		\caption{\emph{Knowledge}}
		\label{fig:colorfractionknowledge}
	\end{subfigure}\hfil%
	\begin{subfigure}{.16\linewidth}
		\centering
		\includegraphics[width=1\linewidth]{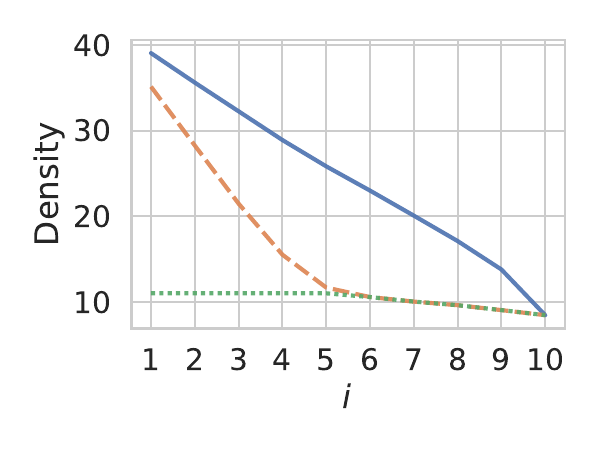}
\vspace{-1.4\baselineskip}
		\caption{\emph{HomoSap}}
		\label{fig:colorfractionhomo}
	\end{subfigure}\hfil%
	\begin{subfigure}{.16\linewidth}
		\centering
		\includegraphics[width=1\linewidth]{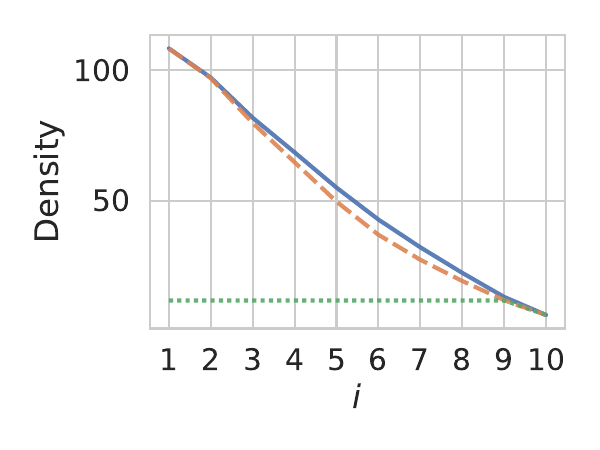}
\vspace{-1.4\baselineskip}
		\caption{\emph{Epinions}}
		\label{fig:colorfractionepinions}
	\end{subfigure}\hfil%
	\begin{subfigure}{.16\linewidth}
		\centering
		\includegraphics[width=1\linewidth]{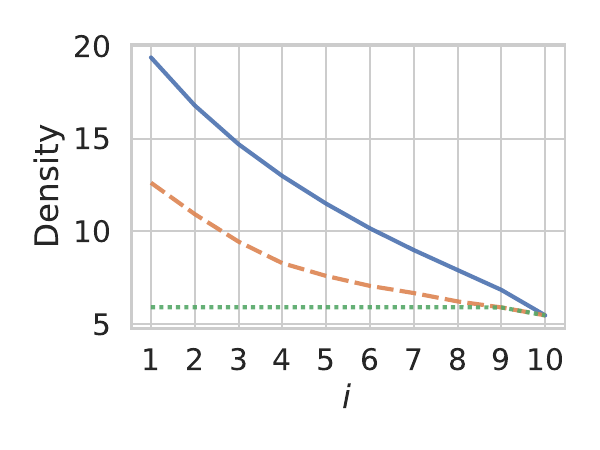}
\vspace{-1.4\baselineskip}
		\caption{\emph{DBLP}}
		\label{fig:colorfractiondblp}
	\end{subfigure}\hfil%
	\begin{subfigure}{.16\linewidth}
		\centering
		\includegraphics[width=1\linewidth]{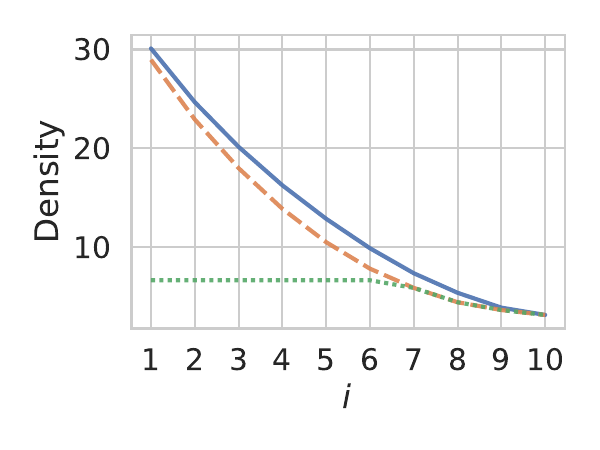}
\vspace{-1.4\baselineskip}
		\caption{\emph{Twitter}}
		\label{fig:colorfractiontwitter}
	\end{subfigure}\hfil%
	\begin{subfigure}{.16\linewidth}
		\centering
		\includegraphics[width=1\linewidth]{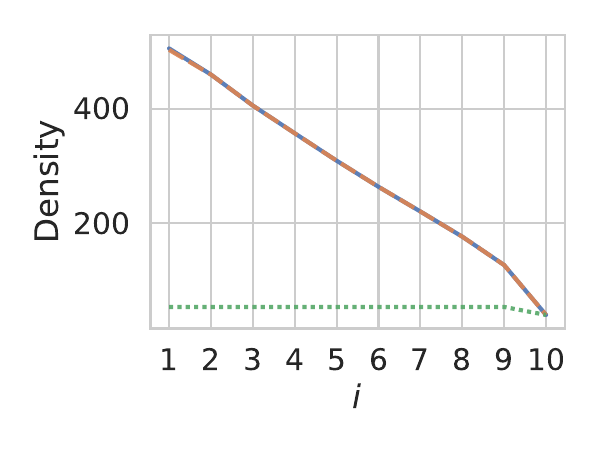}
\vspace{-1.4\baselineskip}
		\caption{\emph{FriendFeed}}
		\label{fig:colorfractionfriendfeed}
	\end{subfigure}
	\caption{Densities for increasing color requirements (the common legend is shown in (a)).}
	\label{fig:colorfraction}
\end{figure*}

\answer{Q6--Running times}
\Cref{table:runningtimes} shows the mean running times and standard deviations for ten repetitions of 
computations of the densities for $i\in\{2,4,6,8\}$ where $i$ and the color requirements are chosen as in \textbf{Q5}.
We show the results in seconds for the four largest data sets. The running times are only a fraction of a second for all other data sets.
For both \algbaseline and \algcolorapprox the running time decreases for increasing $i$ and larger requirements of the colors.
In the case of \algbaseline, the reason is that the higher the requirements, the earlier the algorithm encounters a vertex whose removal would lead to a color deficit, and it stops. For \algcolorapprox, the reason is that the higher the total color requirements, the earlier the subroutine that finds the at least $h$ edges densest subgraph can stop the peeling process.

\begin{table}[htb]
	\centering
	\caption{Running times in seconds (s) for computing the at least $\vect{h}$ colored edges DSP.}  
	\label{table:runningtimes}
	\resizebox{1\linewidth}{!}{ \renewcommand{\arraystretch}{1} %
		\begin{tabular}{lrrrrrrrr}\toprule
			&\multicolumn{2}{c}{$i=2$} &\multicolumn{2}{c}{$i=4$} &\multicolumn{2}{c}{$i=6$} &\multicolumn{2}{c}{$i=8$} \\
			\cmidrule(lr){2-3} \cmidrule(lr){4-5} \cmidrule(lr){6-7} \cmidrule(lr){8-9}  
			\textbf{Data set} & \algbaseline & \algcolorapprox & \algbaseline & \algcolorapprox & \algbaseline & \algcolorapprox & \algbaseline & \algcolorapprox  \\ \midrule
			\emph{Epinions}   &  0.45\spm{0.0} & 0.45\spm{0.0} & 0.37\spm{0.0} & 0.39\spm{0.0} & 0.33\spm{0.0} & 0.33\spm{0.0} & 0.28\spm{0.0} & 0.26\spm{0.0} \\
			\emph{DBLP}       & 1.25\spm{0.0} & 1.74\spm{0.0} & 1.01\spm{0.0} & 1.48\spm{0.0} & 0.94\spm{0.0} & 1.21\spm{0.1} & 0.84\spm{0.1} & 0.88\spm{0.0} \\
			\emph{Twitter}     & 1.00\spm{0.0} & 1.21\spm{0.0} & 0.86\spm{0.0} & 1.07\spm{0.0} & 0.74\spm{0.0} & 0.91\spm{0.0} & 0.61\spm{0.0} & 0.69\spm{0.0}\\
			\emph{FriendFeed} & 18.85\spm{0.2} & 21.06\spm{0.3} & 16.69\spm{0.2} & 16.50\spm{0.2} & 13.97\spm{0.2} & 12.11\spm{0.2} & 10.25\spm{0.1} & 8.25\spm{0.1} \\
			\bottomrule
		\end{tabular}
	}
\end{table}

\subsection{Use Case: Diverse Coauthorship}\label{sec:usecase}
In the following, we use a subgraph of the \emph{DBLP} data set in which edges describe the coauthorship of publications published at one of ten data mining conferences.
The network contains in total $44\,823$ nodes and $170\,659$ edges, each colored with one of ten colors representing the conference, and we are interested in finding densest subgraphs that contain publications of all conferences.
Another view is that edges of color $c$ belong to layer $c$ of a multilayer graph with ten layers. 
Hence, we want to obtain densest subgraphs of the multilayer coauthor graph that contain at least $h_i$ edges in layer~$i$. 
We compare our approximation algorithm \algcolorapprox to the algorithm for the multilayer densest subgraph problem \algmlds.
The multilayer densest subgraph problem is defined as finding a subset $S\subseteq V$ such that
$	\max_{\hat{L}\subseteq L}\min_{\ell \in \hat{L}} \frac{|E_\ell(S)|}{|S|}\cdot|\hat{L}|^\beta,$
is maximized, where $\beta\in \mathbb{R}$ is a parameter controlling the importance of adding few or many layers and $E_\ell$ are the edges in layer $\ell$~\cite{galimberti2017core}.
We computed the unconstrained densest subgraph, the multilayer densest subgraph for $\beta\in\{1,2.2,5\}$, where we chose the values of $\beta$ by increasing in $0.1$ steps starting from one until all layers are in the densest subgraph. Only for the values of $2.2$ and five do the results change.
Furthermore, we use the at least $\vect{h}$ colored edges densest subgraph with the following color requirements. 
Let $t_c$ be the total number of color $c\in[\numcolors]$ edges in $G$.
For each color $c\in[10]$, we define $h_c=t_c/\tau$ with $\tau\in \{10,100,1000\}$.
\Cref{table:usecase} shows the results of the by $S$ induced subgraphs, including the density, where we use the standard definition of density, i.e., $d(S)=|E|/|S|$.
For $\beta=1$, the result of \algmlds equals the unconstrained DSP. For $\beta=2.2$, nine of the ten layers are included. \Cref{fig:usecasea} shows the distribution of the publications. There are no publications from the \emph{KDD} conference in the densest subgraph, and the density dropped significantly to 6.1 from the initial 20. 
Further increasing $\beta$ to five leads finally leads to a densest subgraph containing publications from all conferences; however, the density further dropped to 4.5.
Moreover, the \emph{KDD} conference is still underrepresented with only one publication (see \Cref{fig:usecaseb}).
For our \algcolorapprox, we obtain the densest subgraphs with 1/1000th, 1/100th, and 1/10th of the edges of each color while obtaining higher density values.
\Cref{fig:usecasec,fig:usecased} show the color distributions for $\tau=1000$ and $\tau=100$, respectively. 

\begin{table}[htb]
	\centering
	\caption{Computing densest coauthor subgraphs.}  
	\label{table:usecase}
	\resizebox{1\linewidth}{!}{ \renewcommand{\arraystretch}{0.9} \setlength{\tabcolsep}{10pt}
		\begin{tabular}{llrrrr}\toprule
			\textbf{Algorithm} &  & \textbf{Density} & \textbf{Nodes} & \textbf{Edges} & \textbf{Layers}  \\ \midrule
			{Unconstrained DSP}   &  &  20.0 & 41 & 820 & 2 \\
			\cmidrule(lr){1-6}
			          & $\beta=1$   & 20.0 &  41 & 820  &  2 \\
			\algmlds  & $\beta=2.2$ &  6.1 & 232 & 1407 &  9 \\
			          & $\beta=5$   &  4.5 & 396 & 1766 & 10 \\
			\cmidrule(lr){1-6}
			                 & $\tau=1000$ & 8.7 &  137 &  1194 & 10 \\
			\algcolorapprox  & $\tau=100$  & 9.5 &  409 &  3890 & 10 \\
			                 & $\tau=10$   & 9.4 & 2692 & 25324 & 10 \\
			\bottomrule
		\end{tabular}
	}
	\vspace{-0.8\baselineskip}
\end{table}

\begin{figure}[htb]\centering
	\begin{subfigure}{.44\linewidth}
		\centering
		\includegraphics[width=1\linewidth]{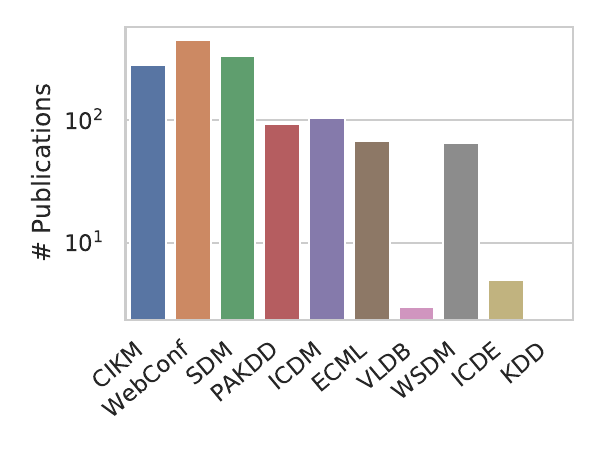}
	\vspace{-1.4\baselineskip}
		\caption{\algmlds $(\beta=2.5)$}
		\label{fig:usecasea}
	\end{subfigure}\hfil%
	\begin{subfigure}{.44\linewidth}
		\centering
		\includegraphics[width=1\linewidth]{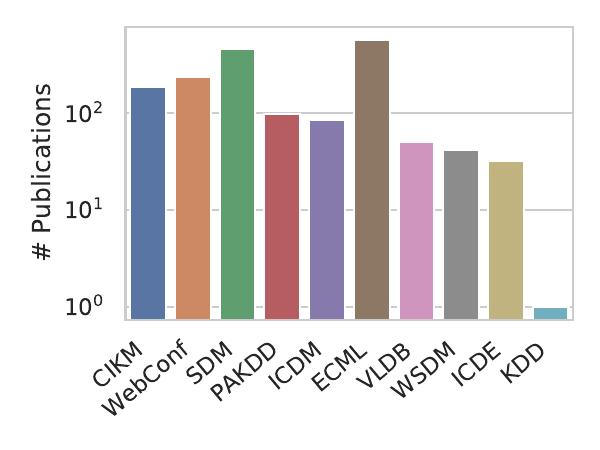}
	\vspace{-1.4\baselineskip}
		\caption{\algmlds $(\beta=5)$}
		\label{fig:usecaseb}
	\end{subfigure}\hfil%
	\begin{subfigure}{.44\linewidth}
		\centering
		\includegraphics[width=1\linewidth]{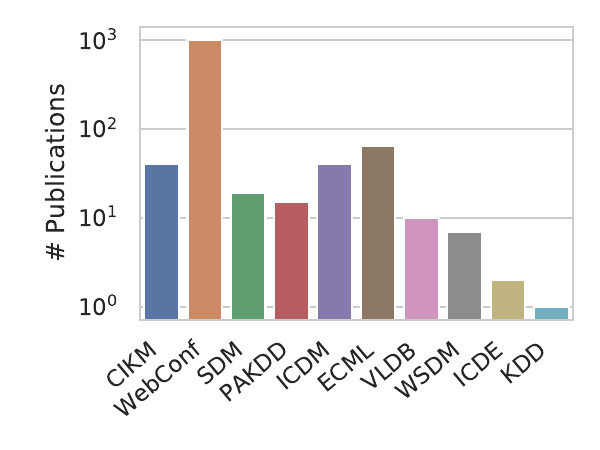}
	\vspace{-1.4\baselineskip}
		\caption{\algcolorapprox $(\tau=1000)$}
		\label{fig:usecasec}
	\end{subfigure}\hfil%
	\begin{subfigure}{.44\linewidth}
		\centering
		\includegraphics[width=1\linewidth]{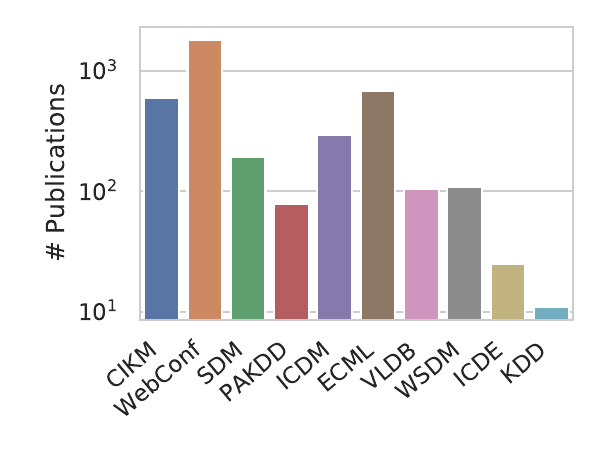}
	\vspace{-1.4\baselineskip}
		\caption{\algcolorapprox $(\tau=100)$}
		\label{fig:usecased}
	\end{subfigure}\hfil%
	\caption{Distributions of conferences in the densest coauthor subgraphs computed with \algmlds and \algcolorapprox.}
	\label{fig:usecase}
\end{figure}

\section{Conclusion and Future Work}

We introduced new variants of diverse densest subgraph problems in networks with single or multiple edge attributes. 
We established the \NP-completeness of decision versions for finding \emph{exactly, at most}, and \emph{at least} $\vect{h}$ colored edges densest subgraphs.
Furthermore, we presented a linear-time constant-factor approximation algorithm for the problem of finding a densest subgraph with at least $\vect{h}$ colored edges in sparse graphs. As an additional result, we introduced the related at least $h$ non-colored edges densest subgraph problem and provided a linear-time constant-factor approximation for it.
Our experimental results validated the practical efficacy and efficiency of the proposed algorithms on a wide range of real-world graphs. 
Future work directions include improving the approximation results for non-sparse graphs and introducing algorithms for the exact and at most $\vect{h}$ variants.

\section*{Acknowledgements}
This research is supported by 
the ERC Advanced Grant REBOUND (834862), 
the EC H2020 RIA project SoBigData++ (871042), and 
the Wallenberg AI, Autonomous Systems and Software Program (WASP) 
funded by the Knut and Alice Wallenberg Foundation.


\appendix

\section{Omitted Proofs}\label{appendix:proofs}
In this section, we provide the omitted proofs.
\subsection{Proofs of \Cref{sec:problems}}\label{approx:hardness}
First, we provide the \NP-completeness results for the decision versions of our edge-diverse densest subgraph problems. 
\begin{theorem}\label{theorem:hardnessexact}
	The decision version of the exactly $\vect{h}$ colored edges version (\ehecdspdec) is \NP-complete.
\end{theorem}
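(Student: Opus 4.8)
The plan is to establish the two standard ingredients: membership in \NP and \NP-hardness. Membership is immediate: for the decision question ``does there exist $S\subseteq V$ with exactly $h_i$ edges of color $i$ for every $i\in[\numcolors]$ and $d(S)\ge D$?'', a candidate vertex set $S$ is a polynomial-size certificate, since one counts the number of edges of each color inside $G(S)$ and evaluates $d(S)=|E(S)|/|S|$ in linear time. Hence \ehecdspdec lies in \NP, and it remains to prove hardness, and already for two colors.

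For hardness I would reduce from \textsc{Clique}: given a graph $H=(V_H,E_H)$ and an integer $k\ge 2$, decide whether $H$ contains a clique on $k$ vertices. I construct an edge-colored graph $G$ on the vertex set $V_H\cup\{a,b\}$ with two fresh vertices $a,b$: color every edge of $H$ with color $1$, and add the single edge $\{a,b\}$ of color $2$. I set the requirement vector by $h_1=\binom{k}{2}$ and $h_2=1$, and set the density threshold to $D=(\binom{k}{2}+1)/(k+2)$. This transformation is computable in polynomial time and uses exactly two colors, both with nontrivial requirements.

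The correctness argument has a forward and a backward direction. Forward: a $k$-clique $C$ of $H$ yields the feasible set $S=C\cup\{a,b\}$, which induces exactly $\binom{k}{2}$ color-$1$ edges and exactly one color-$2$ edge, and has density exactly $D$. Backward: any feasible $S$ must contain both $a$ and $b$, since $h_2=1$ and $\{a,b\}$ is the unique color-$2$ edge, and $S\cap V_H$ must induce exactly $\binom{k}{2}$ color-$1$ edges. The pivotal extremal fact is that a graph on $t$ vertices has at most $\binom{t}{2}$ edges; therefore having exactly $\binom{k}{2}$ color-$1$ edges forces $|S\cap V_H|\ge k$, hence $|S|\ge k+2$ and $d(S)\le D$, and equality $d(S)=D$ forces $|S\cap V_H|=k$ together with exactly $\binom{k}{2}$ induced edges, which means $S\cap V_H$ is complete, i.e.\ a $k$-clique of $H$.

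The main obstacle I anticipate is making the backward direction airtight, namely ruling out any set that meets the exact color counts yet achieves density $\ge D$ without being a padded clique. This is precisely where the extremal bound (edges $\le\binom{t}{2}$, with equality iff the graph is complete) does the work: it shows that the exact-edge constraint together with the density threshold is met only by a $k$-clique together with $\{a,b\}$. The remaining points---polynomial running time of the reduction, correctness of the color counts, and the role of the dummy color-$2$ edge in pinning down $a$ and $b$---are routine.
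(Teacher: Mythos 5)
Your proof is correct. Both you and the paper reduce from $k$-clique to the two-color case with the density threshold set to the density of a clique-plus-gadget, so the high-level route is the same; the gadgets, however, are genuinely different. The paper adds a single apex vertex $t$ joined to every vertex of $H$ by a color-$1$ edge, recolors the original edges with color $2$, and requires exactly $k$ color-$1$ edges: since $t$ is universal, the exact constraint alone pins $|S|=k+1$, and the threshold $\alpha=k/2$ then forces $S\setminus\{t\}$ to be a clique. You instead keep the original edges as color $1$ with requirement $h_1=\binom{k}{2}$ and attach a disjoint color-$2$ edge $\{a,b\}$ with $h_2=1$; here the exact constraint does not fix $|S|$ by itself, and you need the extremal bound (a graph on $t$ vertices has at most $\binom{t}{2}$ edges, with equality iff complete) together with the density threshold to force $|S\cap V_H|=k$ and completeness. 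Your backward direction is airtight as written: feasibility forces $a,b\in S$ and $|E(S)|=\binom{k}{2}+1$ exactly, so $d(S)=(\binom{k}{2}+1)/(|S\cap V_H|+2)\le D$ with equality iff $|S\cap V_H|=k$, which the edge count then certifies as a clique. What each approach buys: the paper's apex gadget transfers more cleanly to the at-least and at-most variants (where it is reused), because the universal vertex controls $|S|$ even when the edge count is only bounded on one side; your gadget is slightly more elementary (no recoloring, no apex, the density does the size-pinning) but leans specifically on the \emph{exactly} semantics, since under an at-least constraint a large dense induced subgraph of $H$ could beat the threshold without containing a $k$-clique of the prescribed size.
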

\begin{proof}
	We show the result for the special case of graphs with two colors (i.e., $1$ and $2$), and we set $\numcolors=1$. 
	In this case, the decision version \ehecdspopt asks to decide if there is a subset $S$ such that the induced subgraph has exactly $h$ color $1$ edges and a density of at least $\alpha$.
	It is clearly in \NP.
	We use a reduction from $k$-clique.
	Given an instance $(G,k)$ of $k$-clique, we construct the following instance of \ehecdspdec:
	\begin{itemize}
		\item Let $G'=(V',E')$ with $V'=V\cup\{t\}$ and $E'=E\cup\{\{t,v\}\mid v\in V\}$,
		and $c(e)=2$ if $e\in E'\cap E$, and $c(e) = 1$ otherwise,
		\item furthermore, $h=k$, and $\alpha=\frac{{k \choose 2}+k}{k+1}=\frac{k}{2}$.  
	\end{itemize}
	
	Now, if $G$ contains a $k$-clique, then \ehecdspdec has a \emph{yes} answer. 
	Because $t$ is connected to all $v\in V$ in $G'$, i.e., also to a set $C$ containing $k$ nodes of a $k$-clique, we find a subgraph $S=C\cup\{t\}$ containing a $(k+1)$-clique. Because $G(S)$ is complete, $d(S)=\frac{{k \choose 2}+k}{k+1}$. 
	
	For the other direction, assume \ehecdspdec has a \emph{yes} answer. This means that $|S|=k+1$ because otherwise, the induced subgraph would contain more than $h$ color $1$ edges as $t$ is connected to all other vertices. Furthermore, in order to achieve the density of $\alpha\geq\frac{{k \choose 2}+k}{k+1}$, there have to be ${k \choose 2}$ edges between the nodes $S\setminus\{t\}$ and $k$ edges from $t$ to the nodes $S\setminus\{t\}$. Because the graph $G'$ is simple, the by $S$ induced subgraph is a $(k+1)$-clique. And due to the construction, $S\setminus\{t\}$ is a $k$-clique in $G$.
\end{proof}

Similarly, the decision version of the at-most $h$ colored edges DSP (\mhecdspdec) is \NP-complete.
We now show the hardness for the at-least $h$ colored edges variant.

\begin{theorem}
	\label{hardness:npexact}
	The decision version of the at least \vect{h} colored edges version (\lhecdspdec) is \NP-complete.
\end{theorem}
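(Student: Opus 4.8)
The plan is to dispatch membership directly and then obtain hardness almost for free by recognizing the monochromatic case as a special instance. Membership in \NP is routine: given a candidate vertex set $S$ (and the target density threshold $\alpha$ from the decision version), one verifies in linear time that $d(S) \ge \alpha$ and that, for every $i \in [\numcolors]$, the set $E(S)$ contains at least $h_i$ edges $e$ with $i \in c(e)$. So the problem lies in \NP, and the remaining task is \NP-hardness.

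For hardness, the cleanest route is to observe that the non-colored at least $h$ edges problem \lhedges of \Cref{def:atleasthedgesdsp} is precisely the single-color special case of \lhecdspdec. Concretely, given a decision instance $(G,h,\alpha)$ of \lhedges, I would set $\numcolors = 1$, assign $c(e) = 1$ to every edge $e$ of $G$, and take the demand vector $\vect{h} = (h)$. Since every edge then carries color $1$, the requirement ``$E(S)$ contains at least $h_1$ edges $e$ with $1 \in c(e)$'' coincides exactly with ``$|E(S)| \ge h$'', while the density objective is untouched. Hence a vertex set $S$ is feasible with $d(S) \ge \alpha$ for the constructed colored instance if and only if it is feasible for the original \lhedges instance; the reduction is an identity map on vertex sets, computable in linear time. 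As the decision version of \lhedges is \NP-complete by \Cref{theorem:atleasthedgeshardness}, it follows that \lhecdspdec is \NP-hard, and together with membership, \NP-complete.

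The subtlety worth flagging is that the genuine difficulty does not live in this theorem but in \Cref{theorem:atleasthedgeshardness}. One might hope to imitate the exactly-$\vect{h}$ reduction directly — attach an apex $t$ joined to all of $V$, color $t$'s edges with $1$ and the original edges with $2$, and demand $h = k$ color-$1$ edges — but this breaks down for the at least variant: both the lower-bound demand and the density objective reward larger and denser subgraphs, so an inequality on the color count cannot pin $|S|$ to $k+1$ the way the exact equality does in the exactly-$\vect{h}$ proof. A slightly larger near-clique containing $t$ can meet the same density threshold without certifying a $k$-clique in $G$, so the soundness direction fails. Routing through \lhedges avoids this obstacle entirely, at the cost of deferring the real reduction to the hardness of \lhedges (itself established by a tailored argument, e.g.\ from a clique- or \dks-type problem). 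I therefore expect the main work to be absorbed by \Cref{theorem:atleasthedgeshardness}, leaving the colored statement as an immediate corollary via the identity reduction above.
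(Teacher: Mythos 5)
Your proof is correct, but it takes a genuinely different route from the paper. The paper proves \Cref{hardness:npexact} by a direct reduction from $k$-clique: it reuses the apex gadget $G'$ from the exactly-$\vect{h}$ proof but glues on a huge monochromatic clique $K_{n^4}$, forcing any witness to contain all of $K_{n^4}$ and then arguing via an explicit density comparison that adding any vertex beyond the intended $k+1$ nodes of $G'$ strictly decreases the density. Your observation about why the bare apex construction fails for the at-least variant is exactly the difficulty the paper's $K_{n^4}$ gadget is designed to overcome, so that remark is on target. Your alternative — noting that the single-color instance with $\numcolors=1$ and $c(e)=1$ for all $e$ is literally the decision version of \lhedges, and invoking \Cref{theorem:atleasthedgeshardness} — is a valid restriction argument and is not circular, since the paper's proof of \Cref{theorem:atleasthedgeshardness} reduces from the at-least-$k$-nodes DSP and does not depend on any of the colored results. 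What your route buys is modularity and the elimination of the delicate density calculations; what it costs is that all the combinatorial work is pushed into \Cref{theorem:atleasthedgeshardness} (whose proof in the paper is itself somewhat nonstandard, being a Turing-style reduction over a family of $|V|^2$ instances), and the statement's logical position would have to move after that theorem rather than standing on its own in \Cref{sec:problems}. Both arguments establish the claim.
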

\begin{proof}
	We show the result for the special case of graphs with two colors (i.e., $1$ and $2$), and we set $\numcolors=1$. 
	The decision version of \lhecdspdec asks to decide if there is a subset $S$ such that the induced subgraph has at least $h$ color $1$ edges and a density of at least $\alpha$.
	It is clearly in \NP.
	We use again a reduction from $k$-clique.
	
	Given an instance $(G,k)$ of $k$-clique, we construct the following instance of \lhecdspdec:
	\begin{itemize}
		\item Let $n=|V(G)|$ and let $K'$ be the complete graph $K_{n^4}$ in which all edges are colored 1. Furthermore, we construct $G'$ as in the proof of \Cref{theorem:hardnessexact}. 
		The graph $H$ is the union of $K'$ and $G'$.
		\item Moreover, we set $h={n^4\choose 2}+k$, and $\alpha = \frac{{n^4\choose 2}+{k \choose 2}+k}{n^4+k+1}$.
	\end{itemize}
	
    Let $S\subseteq V(H)$ be a witness of \lhecdspdec. (i) First note that $K'$ is completely in $G(S)$. Assume that $K'$ is not completely in $G(S)$ and let $T$ be the nodes from $K'$ that are in $S$, then $T$ provides $\binom{|T|}{2}$ color 1 edges. Thus we need to choose $h -\binom{|T|}{2} $ color 1 edges from $G'$, which is impossible because $h -\binom{T}{2} > n$ when $|T|< n^4$.

	(ii) Next, we show that $|S\cap V(G')|=k+1$. Because $|S|\geq n^4+k+1$ and $n^4$ nodes belong to $K'$, the number of nodes in $S'=S\cap V(G')$ is at least $k+1$.
	Now, we show that $|S'|\leq k+1$, by showing that adding additional edges beside the necessary $k$ edges would decrease the density $d(S)$.
	Let $\beta=d(S')$, $s=|S'|$, $e_k=|E(K')|$, and $v_k=|V(K')|$.
	We can express the density $d(S)$ as follows
	\[
	d(S)=\frac{\beta\cdot s+e_k}{s+v_k}.
	\]
	If we add another node $u\in V(G)$ to $S'$, we also add new edges into the induced subgraph $G(S')$. Let $p$ be the number of additional edges.
	Then the density changes to  
	\[
	d(S\cup\{u\})=\frac{\beta\cdot s+e_k+p}{s+v_k+1}.
	\]
	Note that by adding vertex $u$, we also add a new edge of color 1 between $u$ and $t$.
	We show that $d(S)-d(S\cup\{u\})>0$, hence adding additional nodes and, therefore, additional edges of color $1$ leads to decreased density.
	
	\begin{align}
		d(S)-d(S\cup\{u\})&=\frac{\beta\cdot s+e_k}{s+v_k}-\frac{\beta\cdot s+e_k+p}{s+v_k+1}>0\\
				&\Leftrightarrow \beta s^2+\beta sv_k+\beta s+e_ks+e_kv_k+e_k \\&-(\beta s^2+\beta sv_k+e_ks+e_kv_k+ps+pv_k) > 0\\
		&\Leftrightarrow \beta s + e_k - p(s + v_k) > e_k - p(s + v_k)>0
	\end{align}
	To see why the last inequality holds we upper bound $p$ with $n^2$ and $s$ with $n$ such that
	we have
	\[
	{n^4\choose 2} - n^2(n + n^4)>0
	\] 
	which holds for $n\geq 2$.

	Now let $G$ contain a $k$-clique $C$. Hence, by setting $S=C\cup \{s\}$, we have at least $h$ edges with color $1$, and we obtain a total density of $\alpha$.
	
	If \lhecdspdec has a \emph{yes} answer, then it holds that $\alpha = \frac{{n^4\choose 2}+{k \choose 2}+k}{n^4+k+1}$ due to the previous observations (i) and (ii). Specifically, $|S|=n^4+k+1$ because otherwise, the induced subgraph would contain more than $h$ color $1$ edges, which would reduce the density.
	In order to achieve the density of $\alpha$, there need to be ${k\choose 2}+k$ edges in $G'$, which means that the nodes in $G'$ are completely connected. Hence, there is a $k$-clique in $G$.
\end{proof}

\begin{proof}[Proof of \Cref{theorem:atleasthedgeshardness}]
	Given a graph $G=(V,E)$, $h\in\mathbb{N}$, and $\alpha\in \mathbb{R}$, the decision version asks if there
	there exists a densest subgraph with at least $h$ edges and density at least $\alpha$. 
	It is clearly in \NP.
	We use a reduction from the decision version of the at least $k$ nodes DSP, which asks for a densest subgraph with at least $k\in\mathbb{N}$ nodes and density of at least $\beta\in\mathbb{R}$.
	For our reduction, we construct a family of at most $|V|^2$ instances of the at least $h_i$ edges densest subgraph problem with $h_i=i$ for $i\in[|V|^2]$ and $\alpha=\beta$.
	We show that the at least $k$ nodes instance has a \emph{yes} answer iff.~one of the at least $h_i$ edges instances has at least $k$ nodes and a density of at least $\beta$.
	
	First we show that if the at least $k$ nodes instance has a \emph{yes} answer, %
	one of the at least $h_i$ edges instances has at least $k$ nodes and a density of at least $\beta$.
	Let $\phi$ be the minimum number of edges over all graphs that are densest subgraphs of $G$ with at least $k$ nodes.
	Because $1\leq \phi \leq |V|^2$ there is a at least $\phi$ edges densest subgraph instance. 
	Let $S$ be an optimal solution for the at least $\phi$ edges DSP and assume it would not be optimal for the at least $k$ nodes DSP. Let $S'$ be optimal for the latter with the density of $d^*\geq \beta$. 
	Note that $|E(S')|\geq \phi$ which leads to a contradiction to the optimality of $S$.
	Hence, $d(S)\geq d^*\geq \beta$.
	
	For the other direction, it is clear that if there is a solution $S$ to one of the at least $h_i$ edges instances that has at least $k$ nodes and a density of at least $\beta$ then $S$ is also a solution to the at least $k$ nodes instance.
\end{proof}

\subsection{Proofs of \Cref{sec:approx}} 
\begin{proof}[Proof of \Cref{lemma:upperbound}]
	The number of nodes of a graph with $|E|=h$ is minimized if $G$ is complete and there are $p$ parallel edges between each pair of nodes, i.e., $h=p{|V|\choose 2}$. Solving for $|V|$ leads to $\ell(h,p)$. 
\end{proof}

\begin{proof}[Proof of \Cref{lemma:lemma2}]
	Assume $S$ is optimal for the at least $k$-nodes DSP but not optimal for the at least $h$-edges DSP.
	Let $S'$ be optimal for the latter. 
	Now because $|S'|\geq k$ and $d(S')>d(S)$, we have a contradiction to the optimality of $S$.
\end{proof}

\begin{proof}[Proof of \Cref{theorm:optimal}]
	\Cref{alg:alg} computes an optimal solution of the at least $i$-nodes DSP for each $i\in\{\ell(h),\ldots, n\}$. 
	We know that the optimal solution of \lhedges has at most $n$ nodes.
	Due to \Cref{lemma:lemma2}, \Cref{alg:alg} discovers at least one optimal solution $S$ of the densest subgraph with at least $h$ edges. Finally, an optimal solution will be returned as \Cref{alg:alg} returns the $S_i$ with maximal density and at least $h$ edges.
\end{proof}

Before we prove \Cref{theorem:multicolor}, we introduce two lemmas.
\begin{lemma}
	\label{theorem:overlapping1}
	\Cref{alg:alg} and \Cref{alg:alg2} can be adapted to everywhere sparse multi\-graph input $M$. %
\end{lemma}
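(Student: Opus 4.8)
The plan is to verify that every ingredient in the proofs of \Cref{theorm:optimal} and \Cref{theorem:lhedgesapprox} survives the passage from simple graphs to multigraphs, so that \Cref{alg:alg} and \Cref{alg:alg2} require only one structural change: the starting index of their loops. Concretely, I would replace $\ell(h)=\ell(h,1)$ by $\ell(h,p)$, where $p$ is the maximum number of parallel edges between any pair of nodes (equivalently the maximum color multiplicity $\psi(G)$, so that $p\le\numcolors$). \Cref{lemma:upperbound} already supplies $\ell(h,p)$ as a valid lower bound on the number of vertices of a multigraph with at least $h$ edges and at most $p$ parallel edges per pair, so the loop range $\{\ell(h,p),\dots,n\}$ is both sound (no feasible solution is skipped) and well defined for multigraph input $M$.

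The second step is to observe that the at least $k$-nodes DSP, used as a black box inside both algorithms, extends directly to multigraphs: the density $d(S)=|E_M(S)|/|S|$ and the vertex degrees both count parallel edges, so the greedy peeling of \citet{andersen2009finding} and the flow/LP solvers of \citet{khuller2009finding} apply verbatim. Crucially, the correctness arguments of \Cref{lemma:lemma2} and \Cref{theorm:optimal} reference only densities and vertex counts and never invoke simplicity of the graph, so they carry over unchanged; this establishes that the adapted \Cref{alg:alg} is still optimal for the at least $h$-edges DSP on $M$.

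For \Cref{alg:alg2} I would re-run the approximation analysis of \Cref{theorem:lhedgesapprox} with $E_M$ in place of $E$. The feasibility-fixing step still adds at most $h$ edges and hence at most $2h$ vertices (a parallel edge reuses its endpoints, so this worst-case bound is unaffected), giving $|S_i'|\le |S|+2h\le |S|+2c_1|E_M(S)|\le c|S|$. The last inequality is exactly where everywhere sparseness of the multigraph enters: as noted in \Cref{sec:manycolors}, if $G$ is everywhere sparse and the number of distinct colors $\numcolors$ is a constant, then $|E_M(S)|\le \numcolors\,|E(S)|=\mathcal{O}(|S|)$ for every $S\subseteq V$, so $M$ is everywhere sparse and the constant $c$ exists. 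The same density chain then yields the unchanged $\mathcal{O}(1)$ ratio.

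The main obstacle is not any single calculation but the bookkeeping of confirming that none of the inherited arguments secretly relied on single-colored or simple-graph structure. In particular I expect the two delicate points to be (i) checking that $\ell(h,p)$ is a genuine lower bound so the loop cannot discard an optimal multigraph, and (ii) confirming that the everywhere-sparseness constant stays bounded once parallel edges inflate the edge count by up to a factor $p\le\numcolors$. Once these are pinned down, the adaptation of both algorithms is immediate.
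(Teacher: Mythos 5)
Your proposal is correct and follows essentially the same route as the paper's proof: replace $\ell(h)$ by the general bound $\ell(h,p)$ from \Cref{lemma:upperbound}, observe that \Cref{lemma:lemma2} and the at least $k$-nodes DSP approximations depend only on densities and vertex counts and hence apply to multigraphs, and recheck that everywhere sparseness (with $\numcolors$ constant) preserves the $\mathcal{O}(1)$ ratio. The only cosmetic difference is that the paper additionally modifies the edge-addition step of \Cref{alg:alg2} to insert all parallel edges between a chosen pair (which can only help the density), whereas you retain the original bound of at most $h$ edges and $2h$ new vertices; both variants yield the claimed adaptation.
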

\begin{proof}
	
	To adapt \Cref{alg:alg} and \Cref{alg:alg2} for use with the multi\-graph $M$, we need to make the following adjustments:
	In \Cref{alg:alg} and \Cref{alg:alg2}, we use the general lower bound of nodes $\ell(h,\pi)$ instead of $\ell(h)$.
	Furthermore, in \Cref{alg:alg2}, when adding an edge to $M(S_i)$, ensure that all parallel edges between a pair of nodes are added.
	It is noteworthy that \Cref{lemma:lemma2} remains applicable to the multi\-graph $M$. This follows directly from the subgraph definition and density. Consequently, we can conclude that the adapted \Cref{alg:alg} is optimal for the multi\-graph $M$.
	
	To demonstrate that \Cref{alg:alg2} provides a $\mathcal{O}(1)$-approximation for the multi\-graph $M$ we need to establish the following:
	
	(1) If $M(S_i)$ does not contain at least $h$ edges, the addition of parallel edges introduces fewer new nodes compared to the addition of simple edges. Therefore, it can only enhance the density of the resultant subgraph.
	
	(2) Regardless of edge colors, a multi\-graph can be treated as a weighted graph where edge weights correspond to the number of parallel edges between a pair of nodes $(u,v)$. As a result, the 3-approximation algorithm introduced by \cite{andersen2009finding} is applicable to multi\-graphs. This implies that the assumption $c_1|E(S)|\geq |E(S^*)|\geq h$ used in the proof of \Cref{theorem:lhedgesapprox} is valid.
	
	The combination of points (1) and (2) concludes the proof.
\end{proof}

\begin{lemma}\label{lemma:multitwo}
	\Cref{alg:alg3} gives an $\mathcal{O}(1)$-approximation for \lhedges on everywhere sparse multi\-graph input $M$. 
\end{lemma}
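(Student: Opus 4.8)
The plan is to replay the proof of \Cref{thm: thm5} almost verbatim, but with the peeling of \Cref{alg:alg3} executed on the multigraph $M$, where the degree of a node counts all incident parallel edges and where ``at least $h$ edges'' refers to the total number of (parallel) edges $|E_M(S)|$. All the structural facts that drove the simple-graph argument have already been lifted to the multigraph setting in \Cref{theorem:overlapping1}, so the work here is mainly to assemble them in the right order.

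First I would use point~(2) of \Cref{theorem:overlapping1}, which lets us view $M$ as an edge-weighted graph in which the weight of a pair $\{u,v\}$ equals the number of parallel edges between them; under this reinterpretation the peeling guarantee of \citet{andersen2009finding} applies unchanged. Hence, running \Cref{alg:alg3} on $M$ and recording $M_0,\dots,M_{i_{\max}}$ (and conceptually the remaining $M_{i_{\max}+1},\dots,M_n$ of a full core decomposition), for every target $j$ one of the computed subgraphs is a $3$-approximation for the at least $j$-nodes DSP on $M$ whose edge count is at least $1/3$ of the optimum. Because \Cref{lemma:lemma2} remains valid for multigraphs (again by \Cref{theorem:overlapping1}), taking $k$ to be the minimum number of nodes over all densest subgraphs of $M$ with at least $h$ edges yields a recorded subgraph $M(S)$ that is simultaneously a $3$-approximation for the at least $h$-edges DSP on $M$.

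Next I would carry out the same two-case split as in \Cref{thm: thm5}. If $M(S)$ already contains at least $h$ edges, it corresponds to some $M_i$ with $i\le i_{\max}$ and we are immediately done. Otherwise $M(S)$ is infeasible, and I would restore feasibility by adding parallel-edge bundles between pairs of present nodes; by point~(1) of \Cref{theorem:overlapping1} this introduces no more new nodes than adding simple edges would, so the node count grows by at most $2h$. The everywhere-sparseness of $M$ then gives $|S'|\le c\,|S|$ for a constant $c$, and the density estimate from the proof of \Cref{theorem:lhedgesapprox} goes through verbatim, showing that the densest recorded $M_i$ is a $c$-approximation. Since \Cref{alg:alg3} returns the recorded subgraph of maximum density, in either case we obtain an $\mathcal{O}(1)$-approximation, and the running time is that of a weighted core decomposition, $\mathcal{O}(|V|+|E_M|)=\mathcal{O}(n+m)$, by \citet{batagelj2003m}.

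The only genuinely non-routine point is making sure the passage to multigraphs does not silently break the two ingredients the argument rests on: the Andersen--Chellapilla peeling bound and the everywhere-sparseness used to control the added nodes. The first is handled by the weighted reinterpretation above; the second requires $M$ itself to be everywhere sparse, which holds provided the number of distinct colors $\numcolors$ is treated as a constant, as discussed when $M$ is constructed. One must also remember to replace the lower bound $\ell(h)$ by the multigraph version $\ell(h,\numcolors)$ when reasoning about the smallest feasible node set, so that the counting of parallel edges in ``$|E_M(S)|\ge h$'' stays consistent throughout.
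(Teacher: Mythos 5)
Your proposal is correct and follows essentially the same route as the paper: the paper's own (much terser) proof likewise adapts the peeling to remove all parallel edges incident to the peeled node and then invokes points (1) and (2) of the preceding multigraph lemma together with the argument of \Cref{thm: thm5}. Your write-up simply makes explicit the details the paper leaves implicit (the weighted reinterpretation for the Andersen--Chellapilla bound, the multigraph version of \Cref{lemma:lemma2}, the two-case split, and the use of $\ell(h,\numcolors)$).
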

\begin{proof}

	To adapt \Cref{alg:alg3} for use with a multi\-graph input, we should modify the peeling procedure in line 6 as follows: when we remove edges during the peeling procedure, we should remove all parallel edges that are incident to node $v_i$.
	
	With arguments analogous to those presented in points (1) and (2) of the proof for \Cref{theorem:overlapping1}, we can conclude that the theorem remains valid when applied to multi\-graphs. Furthermore, the running time complexity of \Cref{alg:alg3} remains unchanged.
\end{proof}
\begin{proof}[Proof of \Cref{theorem:multicolor}]
	Together with \Cref{lemma:multitwo}, the proof can be adapted from the proof of \Cref{alg:approxlh}, following the same line of reasoning that adding all parallel edges can only enhance the density of the resulting subgraph. 
	The running time increases to $\mathcal{O}(n+m\cdot \pi)$ due to the transformation of the graph to the multi\-graph. Note that $\pi$ is often a small constant.
\end{proof}

\begin{figure}[htb]\centering
	\begin{subfigure}{.33\linewidth}
		\centering
		\includegraphics[width=1\linewidth]{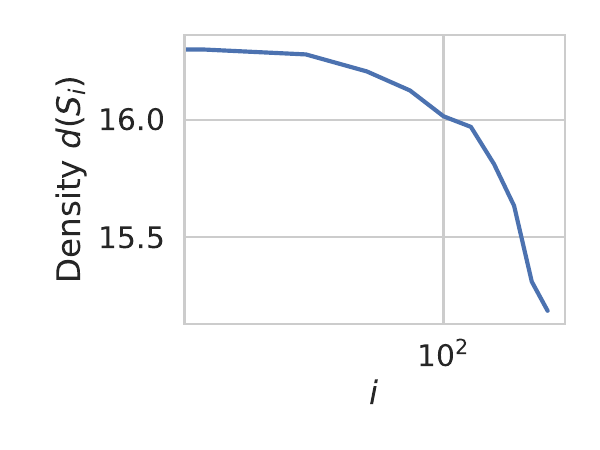}
		\caption{\emph{Hospital}}
		\label{fig:atleasthedgeshospital}
	\end{subfigure}\hfil%
	\begin{subfigure}{.33\linewidth}
		\centering
		\includegraphics[width=1\linewidth]{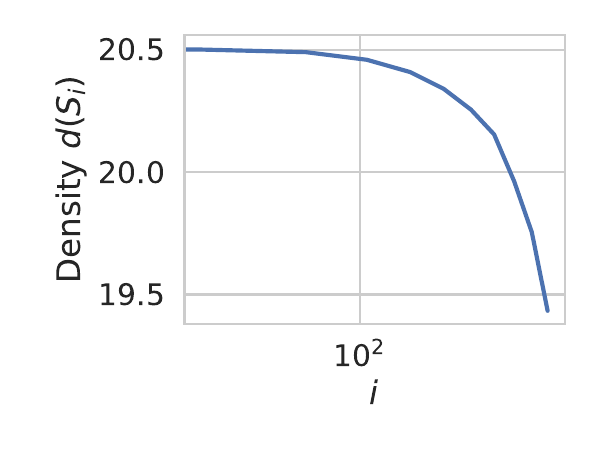}
		\caption{\emph{HtmlConf}}
		\label{fig:atleasthedgeshtmlconf}
	\end{subfigure}\hfil%
	\begin{subfigure}{.33\linewidth}
		\centering
		\includegraphics[width=1\linewidth]{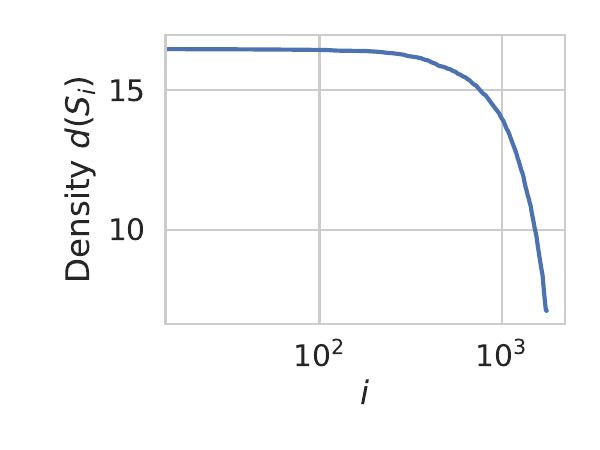}
		\caption{\emph{Airports}}
		\label{fig:atleasthedgesairports}
	\end{subfigure}\hfil%
	\begin{subfigure}{.33\linewidth}
		\centering
		\includegraphics[width=1\linewidth]{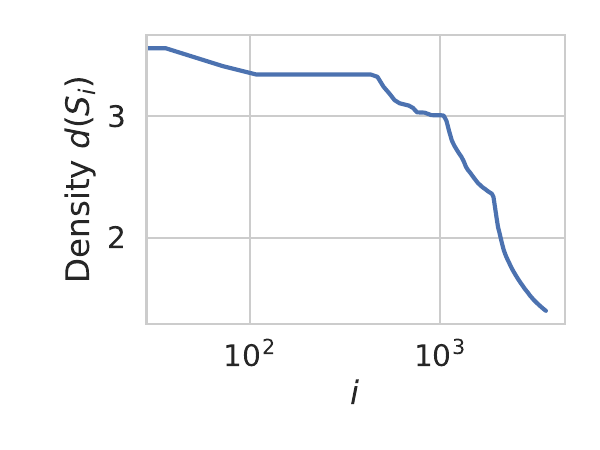}
		\caption{\emph{Rattus}}
		\label{fig:atleasthedgesrattus}
	\end{subfigure}\hfil%
	\begin{subfigure}{.33\linewidth}
		\centering
		\includegraphics[width=1\linewidth]{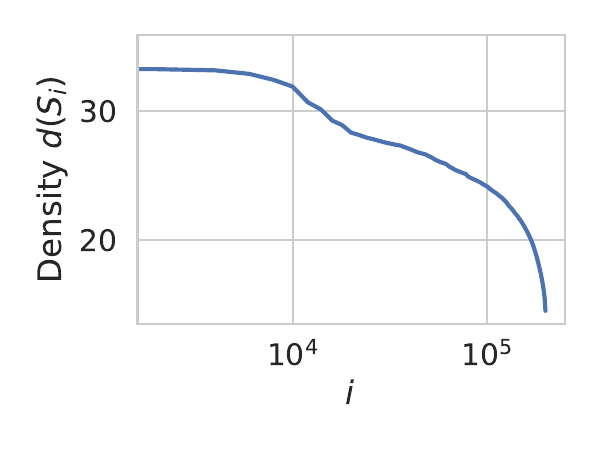}
		\caption{\emph{Knowledge}}
		\label{fig:atleasthedgesknowledge}
	\end{subfigure}\hfil%
	\begin{subfigure}{.33\linewidth}
		\centering
		\includegraphics[width=1\linewidth]{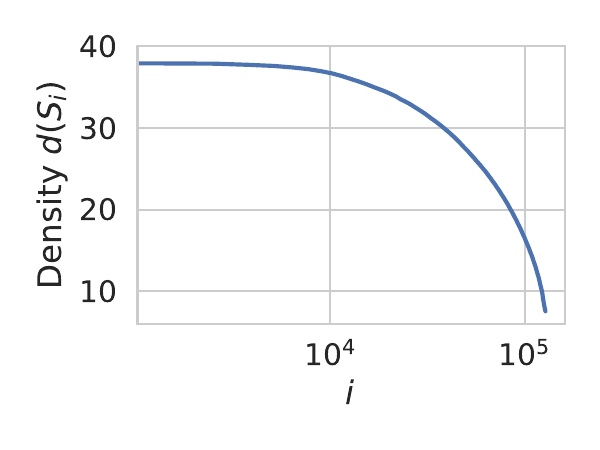}
		\caption{\emph{HomoSap}}
		\label{fig:atleasthedgeshomo}
	\end{subfigure}\hfil%
	\begin{subfigure}{.33\linewidth}
		\centering
		\includegraphics[width=1\linewidth]{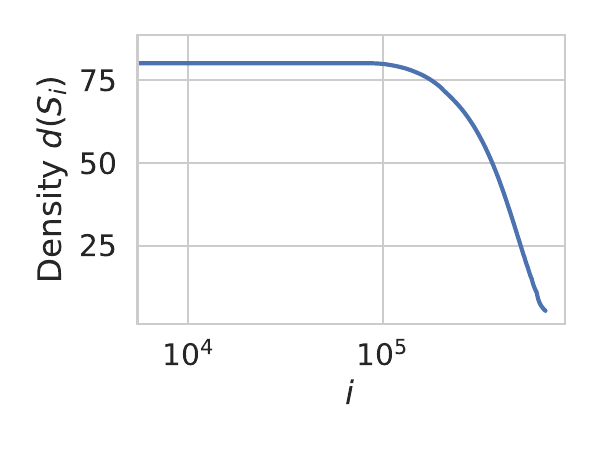}
		\caption{\emph{Epinions}}
		\label{fig:atleasthedgesepinions}
	\end{subfigure}\hfil%
	\begin{subfigure}{.33\linewidth}
		\centering
		\includegraphics[width=1\linewidth]{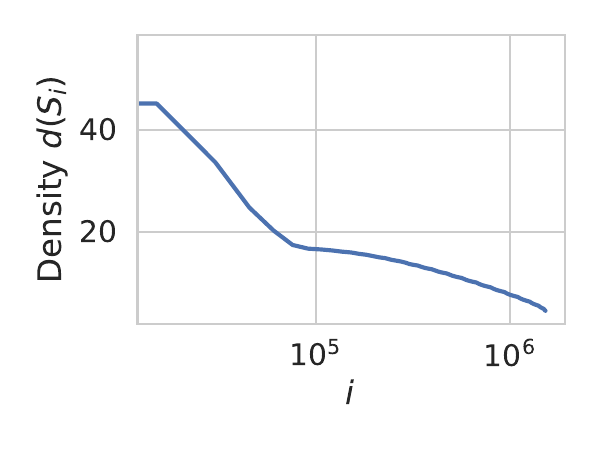}
		\caption{\emph{DBLP}}
		\label{fig:atleasthedgesdblp}
	\end{subfigure}\hfil%
	\begin{subfigure}{.33\linewidth}
		\centering
		\includegraphics[width=1\linewidth]{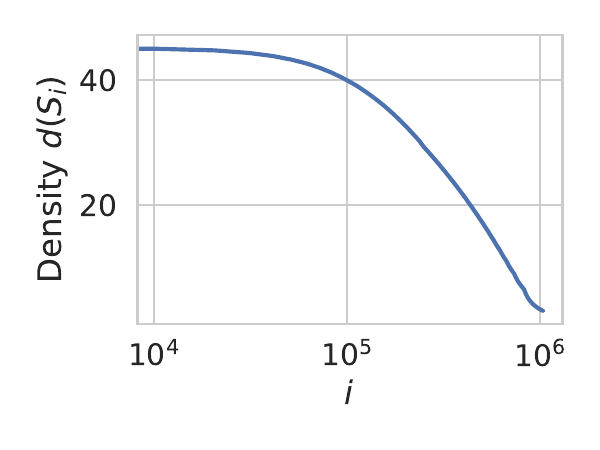}
		\caption{\emph{Twitter}}
		\label{fig:atleasthedgestwitter}
	\end{subfigure}\hfil%
	\caption{The density computed with \algatleasthapprox for increasing numbers of required edges.}
	\label{fig:atleasthedges:appendix}
\end{figure}

\section{ILP Formulations}\label{appendix:ilps} 

For the exact solution of \lhedges, we solve the following ILP where guess the number of nodes $\ell(h)\leq k \leq n$ and return the solution with the maximum density. 
We use $y_u\in\{0,1\}$ to denote if vertex $u\in V$ belongs to the solution $S\subseteq V$.  
Moreover, we use $x_{uv}\in\{0,1\}$ for each edge $e=\{u,v\}\in E$ to encode if $e$ is in the densest subgraph ($x_{uv}=1$) or not ($x_{uv}=0$). We can relax $x_{uv}\in\{0,1\}$ to $x_{uv}\in[0,1]$ because we maximize over $x$ and due to \Cref{eq:ilp1}.

\begin{align}
	&\max \sum_{\{u,v\}\in E}\frac{x_{uv}}{k}\\
	&\text{s.t.} \quad \sum_{\{u,v\}\in E}x_{uv} \geq h  \text{ and }  \sum_{u\in V} = k \\ 
	& \quad\quad x_{uv} \leq y_u \text{ and } x_{uv} \leq y_v \quad \text{for all } \{u,v\}\in E \label{eq:ilp1}\\
	& \quad\quad x_{uv} \in [0,1] \quad\text{for all } \{u,v\}\in E\\
	& \quad\quad y_u \in \{0,1\} \quad\text{for all } u\in V
\end{align}

We can solve the at least $h$ colored edges DSP (\lhecdspopt) by adding the following additional constraint:
\begin{align}
	\sum_{\{u,v\}\in E_c}x_{uv} \geq h_c \quad \text{for all } c\in[\numcolors],  
\end{align}
where $E_c\subseteq E$ is the subset of edges with color $c\in [\pi]$.

\section{Data Set Details}\label{appendix:datasets}
\begin{itemize}
	\item \emph{AUCS} contains relations among the faculty and staff within the Computer Science department at Aarhus University~\cite{magnani2013combinatorial}. The five layers correspond to the following relationship types: {current working relationship}, {repeated leisure activities}, {regular lunch}, {co-authorship of publication}, and {facebook friendship}. 
	\item \emph{Hospital} and \emph{HtmlConf} are face-to-face contact networks between hospital patients and health care workers~\cite{vanhems2013estimating}, and between visitors of a conference~\cite{isella2011s}. The networks spans five and three days, respectively, represented by the colors.
	\item \emph{Airports} is an aviation transport network containing flight connections between European airports~\cite{cardillo2013emergence}. The layers represent different airline companies.
	\item \emph{Rattus} contains different types of genetic interactions of \emph{Rattus Norvegicus}~\cite{de2015structural}. The six layers are physical association, direct interaction, colocalization, association, additive genetic interaction defined by inequality, and suppressive genetic interaction defined by inequality.
	\item \emph{Knowledge} is based on the \emph{FB15K-237 Knowledge Base} data set~\cite{toutanova2015representing}. The FB15K-237 data set contains knowledge base relation triples and textual mentions of the Freebase knowledge graph entity pairs.
	We represent the entities as nodes and different relations colored edges.
	\item \emph{HomoSap} is a network representing different types of genetic interactions between genes in Homo Sapiens~\cite{galimberti2017core}.
	\item \emph{Epinions} is an online social network and general consumer review site. Members on the platform have the option to determine whether or not they trust one another~\cite{leskovec2010signed}.
	\item \emph{DBLP} is a subgraph of the DBLP graph~\cite{ley2002dblp} containing only publications from $A$ and $A^*$ ranked conferences (according to the Core ranking). The edges describe collaborations between authors, and the edge colors represent the different conferences.
	\item \emph{Twitter} is a subgraph of the Twitter graph representing users and retweets in the period of six months before the 2016 US Presidential Elections~\cite{shao2018anatomy}.
	Each edge represents a retweet and is labeled as factual or misinformation.	
	\item \emph{FriendFeed} and \emph{FfTwYt} are based on the former \emph{FriendFeed}
	social media aggregation service that allowed users to consolidate and view updates from various social networking platforms and websites~\cite{DickisonMagnaniRossi2016}. 
	The \emph{FriendFeed} dataset contains interactions among users collected over two months. The three layers represent commenting, liking and following. 
	\emph{FfTwYt} is a network in which users registered a Twitter and a Youtube account associated to their Friendfeed account.
\end{itemize}

\end{document}